\newcommand{\spOmega}{N}
\newcommand{\normalX}{X^\perp}
\newcommand{\tangX}{X^\parallel}
\newcommand{\nablah}{\mcD}
\newcommand{\calO}{{\cal O}}
\newcommand{\mdiv}{\operatorname{div}}
\newcommand{\grad}{\operatorname{grad}}
\newcommand{\divh}{\operatorname{div_h}}
\newcommand{\wL}{\tilde C}
\newcommand{\mimath}{i}
\newcommand{\cvl}{\Delta_{\mathbb L}}
\newcommand{\tL}{L}%
\newcommand{\tD}{D}%
\definecolor{bluem}{rgb}{0,0,0.5}
\definecolor{mycolor}{cmyk}{0.5,0.1,0.5,0}
\definecolor{michel}{rgb}{0.5,0.9,0.9}
\definecolor{turquoise}{rgb}{0.25,0.8,0.7}
\definecolor{bluem}{rgb}{0,0,0.5}
\definecolor{MDB}{rgb}{0,0.08,0.45}
\definecolor{MyDarkBlue}{rgb}{0,0.08,0.45}
\definecolor{MLM}{cmyk}{0.1,0.8,0,0.1}
\definecolor{MyLightMagenta}{cmyk}{0.1,0.8,0,0.1}
\definecolor{HP}{rgb}{1,0.09,0.58}
\newcommand{\Ric}{\mathrm{Ric}}
\global\let\AddToReset=\@addtoreset}
\newcommand{\mcD}{{\mycal D}}
\newcommand{\zX}{\mathring X}
\newcommand{\zK}{\mathring K}
\newtheorem{Theorem} {\sc  Theorem\rm} [section]
\newtheorem{Corollary} [Theorem] {\sc  Corollary\rm}
\newtheorem{lemma} [Theorem] {\sc  Lemma\rm}
\newtheorem{proposition} [Theorem] {\sc  Proposition\rm}
\newtheorem{Proposition} [Theorem] {\sc  Proposition\rm}
\newtheorem{prop} [Theorem] {\sc  Proposition\rm}
\newtheorem{theorem}[Theorem]{\sc  Theorem\rm}
\newtheorem{remark}[Theorem]{\sc Remark\rm}
\newcommand{\beqar}{\begin{deqarr}}
\newcommand{\eeqar}{\end{deqarr}}
\newcommand{\beaa}{\begin{eqnarray*}}
\newcommand{\eeaa}{\end{eqnarray*}}
\newcommand{\bel}[1]{\begin{equation}\label{#1}}
\newcommand{\bea}{\begin{eqnarray}}
\newcommand{\bean}{\begin{eqnarray}\nonumber}
\newcommand{\beal}[1]{\begin{eqnarray}\label{#1}}
\newcommand{\eea}{\end{eqnarray}}
\newcommand{\eeal}[1]{\label{#1}\end{eqnarray}}
\def\typeout{:<+ #.tex}\include{#}\typeout{:<-}1{\typeout{:<+ #1.tex}\include{#1}\typeout{:<-}}
\newcommand{\qed}{\hfill $\Box$ \medskip}
\newcommand{\proof}{\noindent {\sc Proof:\ }}
\newcommand{\be}{\begin{equation}}
\newcommand{\eeq}{\end{equation}}
\newcommand{\ee}{\end{equation}}
\newcommand{\beqa}{\begin{eqnarray}}
\newcommand{\eeqa}{\end{eqnarray}}
\newcommand{\beqan}{\begin{eqnarray*}}
\newcommand{\eeqan}{\end{eqnarray*}}
\newcommand{\ba}{\begin{array}}
\newcommand{\ea}{\end{array}}
\newcommand{\cC}{{\cal C}}
\newcommand{\calL}{\mathcal L}
\newcommand{\calC}{\mathcal C}
\newcommand{\calS}{\mathcal S}
\newcommand{\del}{\partial}
\newcommand{\calF}{\mathcal F}
\newcommand{\calP}{\mathcal P}
\newcommand{\calQ}{\mathcal Q}
\newcommand{\calR}{\mathcal R}
\DeclareFontFamily{OT1}{rsfs}{} \DeclareFontShape{OT1}{rsfs}{m}{n}{<-7> rsfs5 <7-10> rsfs7 <10-> rsfs10}{}
\DeclareMathAlphabet{\mycal}{OT1}{rsfs}{m}{n}
\newcommand{\R}{\mathbb R}
\newcommand{\RR}{\mathbb R}
\newcommand{\CC}{\mathbb C}
\newcommand{\N}{\mathbb N}
\newcommand{\bit}{\begin{itemize}}
\newcommand{\eit}{\end{itemize}}
\newcommand{\tg}{{\tilde g}}
\newcommand{\tr}{{\mbox{\rm tr\,}}}
\newcounter{shownewstuffflag}
\newcommand{\startnewstuff}{\ifnum\value{shownewstuffflag}>0\color{blue}\fi}
\newcommand{\finishnewstuff}{\ifnum\value{shownewstuffflag}>0\color{black}\fi}
\newcounter{oldeq}
\newcounter{mnotecount}[section]
\newcommand{\rmnote}[1]{}
\def\beq{\begin{equation}}
\def\eeq{\;. \end{equation}}
\newcommand{\zg}{{\mathring g}}
\newcommand{\HH}{\mathbb H}
\newcommand{\eq}[1]{(\ref{#1})}
\begin{document}
\title{Initial data sets with  ends of cylindrical type:\\ II. The vector constraint equation}

\author{Piotr T. Chru\'sciel\thanks{Supported in part
by the Polish Ministry of Science and Higher Education grant Nr
N N201 372736.}
\\ Institut des Hautes \'Etudes
Scientifiques, Bures-sur-Yvette, \\ and University of Vienna
\\
\\
Rafe Mazzeo \thanks{Supported in part by the NSF Grant DMS-1105050}
\\ Stanford University
\\
\\
Samuel Pocchiola
\\ Universit\'e Paris Sud
}
\date{}

\maketitle\thispagestyle{fancy} \rhead{\bfseries UWThPh-2012-16 \\ \date{March 21, 2012}}

\abstract{We construct solutions of the vacuum vector constraint equations on manifolds with cylindrical ends.}

\section{Introduction} \label{Sintro}
In a companion paper to this one~\cite{PTCMazzeoCylindrical} we have constructed large families of solutions
of the Lichnerowicz equation on manifolds with cylindrical ends. This paper addresses the complementary
problem of constructing solutions of the vacuum vector constraint equation.

Suppose that $(M^n,g)$ is a Riemannian manifold of dimension $n \geq 3$, and $K$ a symmetric $2$-tensor on $M$.
The vacuum constraint equations take the familiar form
\begin{eqnarray}
& R(g)  = 2\Lambda + |K|_g^2 - (\tr_g K)^2  \label{c-scal} \\
& \mbox{div}_g \, K + \nabla \tr_g K = 0;  \label{c-vec}
\end{eqnarray}
where $\Lambda \in \RR$ is the cosmological constant.  These are called, respectively, the scalar and vector constraint
equations and data $(M,g,K)$ which satisfy both are called initial data sets.  If $\tau := \tr_g K$ is constant, then
the conformal method allows one to effectively decouple these equations. We review this below.

As explained in \cite{PTCMazzeoCylindrical}, there are compelling physical reasons for studying
this problem when the initial metric $(M,g)$ is complete with a finite number of cylindrically
bounded ends, possibly accompanied by a finite number of asymptotically hyperbolic or
asymptotically Euclidean or conic ends. That paper initiated the general study of the constraint
equations on manifolds with ends of cyindrical type, but focused exclusively on the Lichnerowicz
equation.  More specifically, we assumed there the existence of some symmetric $2$-tensor $K$,
which need not have constant trace nor be divergence free, and then consider only the problem of
finding solutions to \eqref{c-scal}, for this given $K$, with the same type of asymptotic
geometry as the initial metric $g$.  Only when $K$ is transverse-traceless, i.e.\ has constant trace and
vanishing divergence,  do the solutions found there directly correspond to solutions of both constraint
equations. However, we adopted this slightly more general point of view in hopes that the arguments there
might eventually lead to more general non-CMC solutions of the full constraint equations on this class of manifolds.

The goal of this paper is to show that we may produce large classes of solutions to the vector constraint
equation \eqref{c-vec} with $\tau = \tr_g K$ constant on manifolds with cylindrically bounded ends.
This provides a satisfactory complement to the results of \cite{PTCMazzeoCylindrical} in the CMC setting,
and the two sets of results together establish the existence of solutiosn to the full constraint
equations under reasonably general hypotheses.

We refer to~\cite{PTCMazzeoCylindrical} for much of the terminology, definitions and notation
used below. In the current paper we are primarily interested in initial metrics $(M,g)$, the ends of which are
either asymptotically cylindrical or asymptotically periodic. Recall that this means
that $M$ has a finite number of ends $E_\ell$, $\ell =1, \ldots, N$, such that the restriction of $g$
to each $E_\ell$ is either asymptotic to a product cylindrical metric $dx^2 + \zg_\ell$, where
$E_\ell \cong \RR^+ \times N_\ell$ and $(N_\ell, \zg_\ell)$ is a compact Riemannian manifold, or
else this restriction is asymptotic to a periodic metric $\zg_\ell$ on $\RR \times N$ of period $T_\ell$.
Accordingly, we shall study the vector constraint equation in either of these two settings.

\section{The conformal method and the vector constraint equation}   \label{sSYd}
Fix a complete Riemannian manifold $(M,g)$ and a symmetric $2$-tensor $K$ on $M$.
We begin with a review of how, in the special case that $\tau := \tr_g K$ is constant, to which we
refer hereafter as the CMC case, one may solve the two equations \eqref{c-scal} and \eqref{c-vec} in
sequence rather than simultaneously. This is known as the conformal method.

The first step is to decompose
\[
K = \frac{\tau}{n} g + L,
\]
where $L$ is again a symmetric $2$-tensor.  Let $\phi$ be any positive smooth function on $M$.
Set
\[
\tg := \phi^{\frac{4}{n-2}} g, \quad \mbox{and}\ \ \widetilde{K} ^{ij} :=
\frac{\tau}{n} \tg + \phi^{\frac{-2(n+2)}{n-2}} L = \frac{\tau}{n} \tg + \widetilde{L}.
\]
Then a straightforward computation shows that $(M,g,K)$ satisfies the two constraint equations if
and only if $(M, \tg, \widetilde{K})$ does;  this uses strongly that $\tau$ is constant.
In the CMC case, \eqref{c-vec} reduces to the simpler equation
\bel{conf1b}
\nabla_i L^{ij} = 0 ,
\ee
where $\nabla$ is the Levi-Civita connection for $g$, and part of what we are asserting is that $L$ is divergence free
for $g$ if and only if $\widetilde{L}$ is divergence free for $\tg$. Hence the advantage of the conformal
method is that if we first find a TT tensor $L$ with respect to $g$ and then insert this into the first
constraint equation, then \eqref{c-scal} becomes a semilinear elliptic equation for $\phi$, which is
called the Lichnerowicz equation. We can then (attempt to) solve this, and then, having determined $\phi$,
define the corresponding $\widetilde{K}$ and hence produce an initial data set $(M,\tg, \widetilde{K})$.

As explained in the introduction, we concentrate in this paper entirely on the problem of finding
appropriate TT tensors $L$ on manifold with cylindrical ends, and then appeal to
\cite{PTCMazzeoCylindrical} for the solution of the remaining steps in this procedure.

The method for finding $L$ proceeds as follows. Start with an arbitrary trace-free symmetric tensor field $A^{ij}$;
this will be referred to as the {\em seed field}. Now set
\bel{conf100} \tL^{ij} = A^{ij} + C(X) ^{ij} , \ee
where the operator $C$ which appears here is called the \emph{conformal Killing operator}, and is defined by
\bel{conf102} C(X) ^{ij}:= \tD^i Y^j + \tD^j X^i - \frac 2n\tD_k X^k g^{ij}\;.
\ee
The requirement that $L^{ij}$ be divergence-free can then be written as
\begin{equation}
\label{conf103}
(\cvl Y ) ^{j} =  \tD_i A^{ij},
\end{equation}
where
\begin{equation}
(\cvl X)^j := - \tD_i(\tD^i X^j + \tD^j X^i - \frac 2n \tD_k X^k g^{ij}).
 \label{12III12.1}
\end{equation}
This procedure is commonly attributed to York, see \cite{CBY,BartnikIsenberg}. The operator \eq{12III12.1} is usually called
the \emph{conformal vector Laplacian}.

For later reference, let us rewrite the operators above in an invariant way.  First, for any vector field $X$, write
\[
S(X) = \mbox{Sym}\, \tD X^\sharp = \calL_X g;
\]
this is the Killing operator, which is also the Lie derivative of the metric in the direction $X$ as well as the symmetrization
of the covariant derivative $\tD X^\sharp$, where $X^\sharp$ is the $1$-form $g$-dual to $X$. We also use the convention
that the divergence of $X$ is
\[
\delta X = - \tr^g \tD X;
\]
the choice of sign corresponds to the foraml integration by parts formula $\langle \delta X, f \rangle = \langle X, \tD f\rangle$.
In terms of all of these, we have
\[
C(X) = S(X) + \frac{2}{n} \delta X \, g,\quad \mbox{and}\qquad \cvl X = \delta C(X).
\]
Finally, observe that
\[
\langle C(X), h \rangle = \langle X, C^* X \rangle = \langle X, 2\beta h \rangle,
\]
where $\beta h = \delta + \frac12 \tD\, \tr h$ is the Bianchi operator.  Since $C(X)$ is trace-free, we see that
\begin{equation}
\cvl =  C^* C,
\label{factorvl}
\end{equation}
so in particular, when $M$ is complete, $\cvl$ is self-adjoint and nonnegative.

We have now arrived at the problem which will occupy us for the rest of this paper. Let $(M^n,g)$, $n \geq 3$,
be a complete manifold with cylindrically bounded ends. We wish to determine the solvability of the problem
\begin{equation}
\cvl Y = F,
\label{conf1010}
\end{equation}
under various conditions on the decay of the inhomogeneous term $F$.

\section{Cylindrical ends and elliptic operators}
Fix any end $E = \RR^+_x \times N$ of the manifold $M$ and consider the restriction of the metric $g$ to $E$.
We say that the end is cylindrically bounded if
\[
C_1 ( dx^2 + \hat{g} ) \leq g \leq C_2 (dx^2 + \hat{g}),
\]
where $C_1, C_2 > 0$ and $\hat{g}$ is a metric on $N$.  Amongst these we distinguish two cases of particular interest:
the first is when $g$ is asymptotically cylindrical, which means that $g = dx^2 + \zg + h$ where $\zg$
is a metric on $N$ and $|h|_{dx^2 + \zg} \to 0$ as $x \to \infty$; the other is when $ g$ is asymptotically periodic,
which means that $g = \zg + h$ where $\zg$ is the lift to $\RR \times N$ of a metric on $(\RR/T\mathbb Z) \times N$
and, once again, $|h|_{\zg} \to 0$ at infinity. In \cite{PTCMazzeoCylindrical} we work with a slightly more general
class of conformally asymptotically cylindrical or periodic metrics, but we shall not do this here since
by the discussion in the last section, the extra conformal factor can be transformed away, and hence is irrelevant to
the problem at hand. On the other hand, in \S 5 we generalize the notion of asymptotically periodic slightly
to allow ends which are not diffeomorphic to products $\RR^+ \times N$, but are just $\mathbb Z$ covers
of compact manifolds. We refer to the beginning of that section for a better description.

Unlike our previous paper, it is impossible to study \eqref{conf1010} using barrier methods since
this equation is a system. Therefore we must appeal to more powerful, but more technical,
parametrix methods.  Fortunately, these are very well-developed, particularly for manifolds
with cylindrical ends, and we shall be able to quote standard literature.

The result we shall need is of the following type.  Let $\mathcal L$ be a second order symmetric
elliptic operator acting on a vector bundle $V$ over $M$ endowed with a Hermitian metric.
We assume that $\calL$ is geometric, e.g.\ of the form $\nabla^* \nabla  + R$ where $R$
is a curvature endomorphism, or at least has the same asymptotic structure as the metric $g$.
Both of these things are true when $\mathcal L = \cvl$, but we phrase things in a slightly more
general way for the moment.  We let $\calL$ act on weighted Sobolev spaces $H^k_\delta(M,V)$,
defined by the norm
\begin{equation}
||f||_{k,\delta} = \sum_{j=0}^k \int_M |\nabla^j f|^2 e^{-2\delta x}\, dV_g.
\label{wss}
\end{equation}
Here $e^{-2\delta x}$ actually represents a weight function which equals this
exponential on each end, where $x$ is the linear variable, and equals a constant
on the compact part of $M$.
In addition, $\nabla$ denotes the natural extension of the Levi-Civita connection to a map
\[
(T^*M)^{\otimes i} \otimes V \to (T^*M)^{\otimes (i+1)} \otimes V.
\]

We are interested in determining some range of values of the weight parameter $\delta$
such that the mapping
\begin{equation}
\calL: H^{k+2}_\delta(M, V) \longrightarrow H^k_\delta(M,V)
\label{mapping}
\end{equation}
is Fredholm, or even invertible. We first list a useful result whose proof relies only on
local elliptic regularity and standard duality arguments (compare~\cite{Bartnik86}).

\begin{prop}
The mapping \eqref{mapping} is Fredholm for some value of $\delta$ if and only if the
corresponding mapping with $\delta$ replaced by $-\delta$ is also Fredholm.
In addition, for all such Fredholm values of the weight parameter, \eqref{mapping} is injective,
respectively surjective, if and only if the mapping with $\delta$ replaced by $-\delta$ is surjective, respectively injective.
\label{duality}
\end{prop}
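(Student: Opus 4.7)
The plan is to deduce the proposition from standard Banach-space Fredholm theory together with the observation that $\calL$ is formally self-adjoint, so that the weight $-\delta$ spaces arise naturally as duals of the weight $\delta$ spaces. First I would note that, up to equivalence of norms, $H^k_\delta(M,V) = e^{\delta x} H^k_0(M,V)$, where $e^{\delta x}$ denotes the weight function implicit in \eqref{wss}. Consequently, the standard $L^2$ pairing $(f,g) \mapsto \int_M \langle f, g\rangle\, dV_g$ extends continuously to a nondegenerate pairing $H^k_\delta \times H^{-k}_{-\delta} \to \RR$, identifying $(H^k_\delta)^*$ with $H^{-k}_{-\delta}$. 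Under this identification, the Banach-space adjoint of the operator $\calL: H^{k+2}_\delta \to H^k_\delta$ is exactly the formal adjoint $\calL$, now acting as $H^{-k}_{-\delta} \to H^{-k-2}_{-\delta}$.

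The closed range theorem for bounded operators between Banach spaces then implies that $\calL: H^{k+2}_\delta \to H^k_\delta$ is Fredholm if and only if its dual map is, and in that case $\ker(\calL^*) \cong \mathrm{coker}(\calL)$ and $\ker(\calL) \cong \mathrm{coker}(\calL^*)$ as finite-dimensional vector spaces. So at this stage we already get the equivalence of Fredholmness for $\delta$ and the Fredholmness of the dual map at weight $-\delta$, as well as the usual pairing of kernels with cokernels.

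The remaining task is to replace the negative-order spaces that arise as duals by positive-order spaces with the same weight. For this I would invoke interior elliptic regularity: any element of $\ker(\calL: H^{-k}_{-\delta} \to H^{-k-2}_{-\delta})$ is a distributional solution of an elliptic equation, so by local regularity together with the uniform bounded geometry of the cylindrically bounded ends it actually lies in $H^{m}_{-\delta}$ for every $m$. The same covering estimate, applied to the inhomogeneous problem, shows that both Fredholmness and the index are independent of the Sobolev order $k$, so $\calL: H^{k+2}_{-\delta} \to H^k_{-\delta}$ is Fredholm if and only if $\calL: H^{-k}_{-\delta} \to H^{-k-2}_{-\delta}$ is. Combining this with the previous paragraph yields the asserted equivalences between Fredholmness, injectivity, and surjectivity at weights $\pm\delta$.

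The main technical point to watch is the uniform version of interior elliptic regularity in weighted spaces: the constants in the local $L^2$ or Schauder estimates must be independent of position along the end. This is precisely what the cylindrically bounded hypothesis, together with the assumption that $\calL$ has the same asymptotic structure as the metric, is designed to provide, via a standard covering of each end by coordinate balls of bounded geometry. Once that uniformity is in hand, the rest of the argument assembles from off-the-shelf functional analysis, which is why the proof is merely sketched in the statement.
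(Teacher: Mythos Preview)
Your proposal is correct and follows exactly the approach the paper indicates: the paper does not give a detailed proof but simply notes that the result ``relies only on local elliptic regularity and standard duality arguments,'' which is precisely the combination you have spelled out --- identifying $(H^k_\delta)^*$ with $H^{-k}_{-\delta}$ via the $L^2$ pairing, using formal self-adjointness of $\calL$ to recognize the Banach adjoint, and then invoking uniform interior elliptic regularity to pass between Sobolev orders. There is nothing to add.
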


The problem then becomes one of determining the set of values $\delta$ for which \eqref{mapping}
is Fredholm, and then the finer problem of determining for which of these Fredholm values, it is
injective or surjective.

As we have already remarked, there are good general criteria for this in both of the geometric
settings of interest here.  These criteria depend on a set of values $\lambda \in \CC$, called
the indicial roots of $\calL$.  While the abstract definition of these indicial roots is simple enough,
and there are some simple general structural results about them, it can be difficult to say much
about their precise locations in any given situation.  Our main results here address this question.
Namely, we derive a set of constraints on the indicial roots of the conformal vector Laplacian
$\cvl$ on manifolds $(M,g)$ with asymptotically cylindrical or asymptotically periodic ends.
Rather than giving further definitions at this general level, we now specialize to the operator of interest.

In the next two sections, we consider the structure of the conformal vector Laplacian  $\cvl$ on manifolds
of the form $\RR \times N$ with translation invariant metric $dx^2 + \zg$ or else
periodic metric $\zg$. In either setting we define the indicial roots and make a series of
calculations which gives some information about their location.  We then use this information
in \S 6 to describe the resulting global mapping properties of $\cvl$ on weighted Sobolev spaces,
and then describe the existence theorems for TT tensors which we can derive from these.

\section{The conformal vector Laplacian  on product cylinders}
 \label{s17II11.1}
In this section we study the analytic properties of the conformal vector Laplacian  $\cvl$
on the cylinder $M = \RR \times N$ endowed with the product metric
\begin{equation}
g = dx^2 + \zg.
\label{prodmet}
\end{equation}
We first derive a more explicit expression for this operator adapted to this product structure,
which leads to the definition of its indicial roots. The remainder of the section is devoted to
the statement and proof of a result about the location of these roots.

\subsection{A formula for the conformal vector Laplacian }
 \label{subsec:cylindre}
Let $X$ be a vector field on $\RR \times N$. There is a natural decomposition
\bel{Xdecomp}
X=  f\partial_{x} + Y
\ee
where $Y$ is tangent to the $N$ factors. We label these components as $\normalX = f$ and $\tangX = Y$.
Also, denote by $\nablah$ the Levi-Civita connection of $(N,\zg)$ and $\calS(Y) = \calL_Y \zg$ the symmetrization of $\nablah Y^\sharp$.


In the calculations below, we use both invariant notation as well as adapted coordinates $(x,y)$,
where $x \in \RR$ and $\{y^A\}$ is a coordinate system on $N$.  We sometimes use the index
$0$ for $x$ and then assume that $A \geq 1$. First note that $\tD \del_x \equiv 0$,
and if $Y, Y'$ are tangent to the $N$ factors, then so is $\tD_Y Y'$, hence $\tD_Y Y' = \nablah_Y Y'$.
(These statements are equivalent to the vanishing of all Christoffel symbols for $g$ which have $x$ indices,
and the identification of all the remaining ones with the Christoffel symbols for $\zg$.)

We begin by calculating $C(X)$ for $X$ as in \eqref{Xdecomp}.  First, by the remarks above,
\[
\tD X = \nablah Y + \del_x Y \otimes dx + \del_x f \, \del_x \otimes dx,
\]
whence
\beal{Ceqspm}
C(X)_{00} &=& 2\left(1-\frac{1}{n} \right) \del_x f  + \frac{2}{n} \delta^{\zg} Y ,  \\
\label{Ceqspm2}  C(X)_{0A} &=& \partial_A f +\partial_x Y_A,  \\
C(X)_{AB} &=& \calS(Y)_{AB} + \frac 2 n ( \delta^{\zg} Y - \partial_x f )\zg_{AB}, \notag \\
& : = & \tilde{C}(Y) - \frac{2}{n} \del_x f \zg_{AB},
\label{Ceqspm3}
\eea
where, by definition
\begin{equation}
\tilde{C}(Y) = \calS(Y) + \frac{2}{n} \delta^{\zg} Y \zg.
\label{ckoN}
\end{equation}
Note that while this `reduced' conformal Killing operator is an operator on $N$, it is \emph{not} the conformal Killing
operator $C^N$ for $(N,\zg)$ because of the different constant in front of the second term (the correct constant in $C^N$
is $2/(n-1)$).   For later use, we record that
\begin{equation}
\begin{split}
\tilde{C}(Y) & = C^N(Y) - \frac{2}{n(n-1)} \delta^{\zg} Y \zg \\
& \quad \Rightarrow  \tr^{\zg} \tilde{C}(Y) = -\frac{2}{n} \delta^{\zg} Y.
\end{split}
\label{trctilde}
\end{equation}
We then derive that
\begin{eqnarray}
\label{DLz} (\cvl  X)^{\perp } & = & - 2 \left( 1-\frac{1}{n} \right) \partial_{x}^{2} f+ \Delta_{\zg} f +  \left(1-\frac{2}{n} \right)\partial_{x} \delta^{\zg} Y \\
\label{DLA}
(\cvl  X)^{\parallel} & = & -\del_x^2 Y + \widetilde{\cvl} Y + \left(\frac{2}{n}-1 \right) \del_x \nablah  f.
\end{eqnarray}
The operator $\widetilde{\cvl}$ which appears here is given by
\begin{equation}
\label{cvLdef}
\widetilde{\cvl} = \delta^{\zg} \tilde{C}.
\end{equation}
It is once again the reduction to $N$ of the conformal vector Laplacian  on $\RR \times N$.  Since $\tilde{C}(Y)$ is not necessarily
trace-free,
 this operator is {\it not the same} as $\frac{1}{2}\tilde{C}^* \tilde{C}$, nor is it equal to the conformal vector Laplacian
for $(N,\zg)$. However, it is still self-adjoint and nonnegative, as can be seen from the identities
\begin{equation}
\widetilde{\cvl} = \frac{1}{2} \tilde{C}^* \tilde{C} + \frac{2}{n(n-1)} \nablah \delta^{\zg},
\label{facredcvl}
\end{equation}
and the quadratic form version
\begin{equation}
\langle Y, \widetilde{\cvl} Y \rangle = \frac{1}{2} || \tilde{C}(Y)||^2 + \frac{2}{n^2} || \delta^{\zg} Y||^2.
\label{redquadcvl}
\end{equation}

\subsection{Indicial roots}
Any elliptic operator $\calL$ on the cylinder $\RR \times N$ which is translation invariant in the $x$ direction
determines a set of indicial roots $\Lambda(\calL)$, which is a discrete set of complex numbers $\{ \lambda_j\}$
with the property that $|\mbox{Im} \lambda_j| \to \infty$ as $j \to \infty$. These numbers measure the precise rate
of exponential growth or decay of the special `separation of variable' solutions to the equation $\calL u = 0$.

Focusing directly on the conformal vector Laplacian  $\cvl$, define the {\it indicial family}  $I_\lambda(\cvl)$ to be the conjugate
of $\calL$ by the Fourier transform $\calF$, $I_\lambda(\cvl) = \calF \circ \cvl  \circ \calF^{-1}$. This amounts
to replacing $\del_x$ by $i\lambda$. Thus, denoting the Fourier transforms of the various components with hats,
we have $I_\lambda(\cvl)(X) = \hat{w} \del_x + \hat{W}$, where
\begin{equation}
\begin{split}
\hat{w}  &  =  \left(\Delta_{\zg} + 2\left(1 - \frac{1}{n}\right) \lambda^2 \right) \hat{f}
+i \lambda \left(1- \frac{2}{n} \right) \delta^{\zg} \hat{Y}, \\
\hat{W} & = \left(\widetilde{\cvl } + \lambda^2\right) \hat{Y}
+ i  \lambda \left(\frac{2}{n} -1  \right)  \nablah  \hat{f}.
\end{split}
\label{indfam}
\end{equation}


It is immediate from the fact that $\cvl$ itself is elliptic when $n \geq  3$ that $I_{\lambda}(\cvl)$ is elliptic for each $\lambda$, and
determines a Fredholm mapping $H^{k+2}(N) \to H^k(N)$ for any $k$.  Furthermore, it is a polynomial in $\lambda$;
such families are sometimes called operator pencils.  The analytic Fredholm theorem states that either there
exists no value of $\lambda$ for which $I_\lambda(\cvl)$ is invertible, or else it is invertible away from
a discrete set of complex numbers $\Lambda(\cvl) = \{\lambda_j\}$, which is by definition the set of indicial roots of $\cvl$.
Standard elliptic regularity implies that this set is independent of $k$. It is also straightforward to see,
using the semiboundedness of $\Delta_{\zg}$ and $\widetilde{\cvl}$, that any horizontal strip
$\{ a < \Im( \lambda) < b \}$ contains at most a finite number of indicial roots; this shows
that $\Lambda (\cvl) \neq \CC$, and also vindicates the assertion that $|\Im( \lambda)_j| \to \infty$.
This latter statement was proved in \cite{AgmonNirenbergBanach} (see also \cite{MelroseMendoza,RafeEdgeCPDE}).
Finally, since $I_\lambda(\cvl)$ is invertible for some values of $\lambda$, this family has index zero, which
means that $\lambda$ is an indicial root if and only if there exists a nontrivial solution to
$I_\lambda(\cvl)( \hat{f}, \hat{Y}) = (0,0)$, or in other words, it suffices to check injectivity rather than surjectivity.

Note that $I_0(\cvl)(f,0) = (0,0)$ for $f \equiv \mbox{const.}$. This means that $0$ is always an indicial root. It is
not hard to see that $\del_t$ generates the entire nullspace of this operator.

The paper \cite{IMP} calculates the full set of indicial roots for $\cvl$ when $(N,\zg)$ is $S^2$ with its standard metric;
the analogous calculation for $(N,\zg)$ a sphere of any dimension (with its standard metric) is given below in
Appendix~\ref{subsec:indicialsphere}.

\subsection{Mapping properties on product cylinders}
We now explain the significance of indicial roots for the mapping properties of $\cvl$ on weighted Sobolev spaces.
\begin{prop}
The mapping
\begin{equation}
\cvl: e^{\delta x} H^{k+2}( \RR \times N) \longrightarrow e^{\delta x} H^k(\RR \times N)
\label{mapcyl}
\end{equation}
is Fredholm if and only if $\delta \neq \mbox{Im}\, \lambda_j$ where $\lambda_j$ is any indicial root of $\cvl$.
Furthermore, if this map is Fredholm, then it is invertible.
\end{prop}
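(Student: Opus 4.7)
The plan is to conjugate out the weight and then apply Fourier transform along the cylindrical factor. Setting $\cvl^{(\delta)} := e^{-\delta x} \, \cvl \, e^{\delta x}$ produces a second-order elliptic operator, translation-invariant in $x$, with the property that $\cvl$ acting on $e^{\delta x} H^{k+2} \to e^{\delta x} H^k$ is Fredholm (respectively invertible) precisely when $\cvl^{(\delta)}$ has the corresponding property on the unweighted Sobolev spaces. Since $\partial_x$ is conjugated to $\partial_x + \delta$, a direct computation gives $I_\lambda(\cvl^{(\delta)}) = I_{\lambda + i\delta}(\cvl)$, so the hypothesis $\delta \neq \mbox{Im}\,\lambda_j$ translates into $I_{\lambda + i\delta}(\cvl): H^{k+2}(N) \to H^k(N)$ being invertible for every real $\lambda$.

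Assuming this, I would take the Fourier transform in $x$, reducing $\cvl^{(\delta)} v = g$ to the parametrized family $I_{\lambda + i \delta}(\cvl)\, \hat{v}(\lambda, \cdot) = \hat{g}(\lambda, \cdot)$, $\lambda \in \RR$; pointwise solvability is supplied by the previous step. The crux is uniform boundedness of the operator-valued map $\lambda \mapsto I_{\lambda + i\delta}(\cvl)^{-1}$ as a map $H^k(N) \to H^{k+2}(N)$. For bounded $\lambda$ this follows from the analytic Fredholm theorem (already invoked to define $\Lambda(\cvl)$) together with a compactness argument. For $|\lambda| \to \infty$ one appeals to parameter-dependent elliptic theory: the explicit block form \eqref{indfam} shows that $\lambda$ enters as a genuine second-order symbolic parameter, with the diagonal $\lambda^2$-terms dominating the $O(|\lambda|)$ off-diagonal coupling, which yields a parameter-elliptic estimate of the form
\[
\|\hat{v}\|_{H^{k+2}(N)} + |\lambda|^2 \|\hat{v}\|_{H^k(N)} \leq C\, \|I_{\lambda + i \delta}(\cvl)\, \hat{v}\|_{H^k(N)}
\]
for $|\lambda|$ large. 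Combining both regimes and applying Plancherel packages the fibered inverses into a bounded two-sided inverse $H^k(\RR \times N) \to H^{k+2}(\RR \times N)$ of $\cvl^{(\delta)}$, giving Fredholmness \emph{and} invertibility simultaneously.

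Conversely, if $\delta = \mbox{Im}\,\lambda_j$ with $\lambda_j = \xi_j + i\delta$ an indicial root, I would start from a nontrivial kernel element $\phi \in H^{k+2}(N)$ of $I_{\lambda_j}(\cvl)$ and form quasi-modes $v_n(x,y) = \chi_n(x)\, e^{i \xi_j x}\, \phi(y)$ on the $\cvl^{(\delta)}$ side, with $\chi_n$ a slowly-varying cutoff whose support marches off to infinity; these have uniformly bounded $H^{k+2}$-norm but satisfy $\cvl^{(\delta)} v_n \to 0$ in $H^k$ while containing no convergent subsequence, obstructing the closed-range property. The main technical point in the entire argument is the parameter-elliptic estimate at infinity along the shifted line; once this is in hand, no separate analysis of kernel or cokernel on the weighted spaces is required, because the Fourier construction directly supplies a two-sided inverse whenever $\delta$ avoids the indicial set.
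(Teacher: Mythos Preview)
Your proposal is correct and follows essentially the same route as the paper: Fourier analysis along the cylindrical factor, reducing invertibility to uniform invertibility of $I_{\lambda}(\cvl)$ along the line $\mbox{Im}\,\lambda=\delta$, with a quasi-mode/cutoff argument for the failure of closed range when $\delta$ hits an indicial value. Your conjugation $\cvl^{(\delta)}=e^{-\delta x}\cvl e^{\delta x}$ followed by the real Fourier transform is exactly equivalent to the paper's direct use of the complex Fourier transform and Plancherel on $e^{\delta x}L^2$; the only notable difference is that you supply an explicit justification (parameter-ellipticity for large $|\lambda|$, compactness for bounded $|\lambda|$) for the uniform bound on $I_{\lambda+i\delta}(\cvl)^{-1}$, which the paper simply asserts.
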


One direction of this is easy.  If $\delta$ does equal the imaginary part of an indicial root, then
there exists a solution of $\cvl u = 0$ which grows or decays like $e^{\delta x}$ both as $x \to +\infty$
and also as $x \to -\infty$. Because of this asymptotic behaviour, $u$ is right on the border of
lying in $e^{\delta x} L^2$, and it is then easy to define a sequence of compactly supported cutoffs $u_j$
of $u$ with disjoint support which have the property that $||u_j||_{k+2,\delta}  \to \infty$ while
$||\cvl u_j ||_{k,\delta} \leq C$. This shows that \eqref{mapcyl} does not have closed range in this case.

The other implication is not much harder. The simplest proof uses the extension of the
Fourier transform in $x$ to the complex plane. The Plancherel theorem for this extended transform
states that the Fourier transform maps $e^{\delta x} L^2(\RR \times N)$ isometrically
to $L^2( \RR \times N)$, where the first factor $\RR$ is the real part of $\lambda = \xi + i\delta$.
If the line $\mbox{Im}\,\lambda = \delta$ contains no indicial roots, then the inverse of the
indicial family $I_\lambda(\cvl)^{-1}$ has norm which is uniformly bounded along this line, and
\[
f(x,z) \mapsto \hat{f}(\lambda, z) \longmapsto \int_{\mathrm{Im} \lambda = \delta}
 e^{ix\lambda } I_{\lambda}(\cvl)^{-1} \hat{f}(\lambda, z)\, d\lambda := u(x,z),
\]
$z \in N$, provides an inverse for \eqref{mapcyl}.

\subsection{The indicial-root-free region}
Our goal in the remainder of this section is to establish a result which describes a region of the plane, which
depends on the lowest eigenvalue $\lambda_1$ for the scalar Laplacian $\Delta_{\zg}$ only, which contains no indicial roots
of $\cvl$.



\begin{proposition} \label{prop:main}  Let $n \geq 3$, and let $\lambda_1$ denote the
smallest nonzero eigenvalue of the scalar Laplacian $\Delta_{\zg}$. Then the only indicial roots
of $\cvl$ on $\RR \times N$ in the region
\begin{equation*}
\Big\{ |\Im (\lambda)| \le \frac{4 (n-2)^2}{ n-1} |\lambda|\Big\} \cup
\Big\{|\Im (\lambda)  | < \frac{{\sqrt{\lambda_1}}}{4(n-2)}, |\lambda|< \sqrt{\frac{\lambda_1}{2(n-1)}} \Big\}
\end{equation*}
are either $\lambda = 0$ or else lie on the imaginary axis.   In any case, there is a horizontal
strip $\{|\Im( \lambda)| < \eta\}$ which contains only the indicial root $\lambda = 0$.
\end{proposition}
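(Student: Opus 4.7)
The plan is to combine the two components of the indicial system into a single real-coefficient quadratic in $\lambda$ via $L^{2}(N)$-pairings, and then exploit Cauchy--Schwarz and the Poincar\'e inequality on $(N,\zg)$ to locate its roots. Writing $a = 2(n-1)/n$ and $b = (n-2)/n$, suppose $(f,Y) \ne (0,0)$ lies in $\ker I_\lambda(\cvl)$. Pair the first indicial equation with $f$ and the second with $Y$ in the Hermitian inner product, use $\langle \delta^{\zg} Y, f\rangle = \langle Y, \nabla f\rangle$, and add the two identities. Setting $A = \|df\|^2$, $B = \|f\|^2$, $C = \langle \widetilde{\cvl}Y, Y\rangle \ge 0$, $D = \|Y\|^2$, and $P = \langle \nabla f, Y\rangle$, the sum collapses to
$$ (aB + D)\lambda^2 + 2b(\Im P)\lambda + (A + C) = 0, $$
a quadratic with real coefficients; hence $\lambda$ is either real or a complex non-real root of a real quadratic.

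For real $\lambda$ I would use the factorization $I_\lambda(\cvl) = C_\lambda^* C_\lambda$ valid on the real axis, so that $\ker I_\lambda(\cvl) = \ker C_\lambda$. A componentwise analysis of $C_\lambda(f,Y) = 0$ --- the mixed block forces $Y = (i/\lambda)\nabla f$ when $\lambda \ne 0$, whereupon the $(0,0)$ block becomes $\Delta_{\zg} f = -(n-1)\lambda^2 f$, impossible since $\Delta_{\zg} \ge 0$ --- rules out real $\lambda \ne 0$ as indicial roots. For non-real $\lambda$ the quadratic has negative discriminant and gives
$$ \Re\lambda = -\frac{b\,\Im P}{aB + D}, \qquad |\lambda|^2 = \frac{A+C}{aB + D}. $$
Cauchy--Schwarz $|P|^2 \le AD$ combined with $C \ge 0$ then yields $|\Re\lambda|^2/|\lambda|^2 \le b^2 D/(aB+D) \le b^2$, so $|\Re\lambda| \le \tfrac{n-2}{n}|\lambda|$; this confines any such root to a fixed double cone about the imaginary axis, and in particular excludes it from the first region in the statement.

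For the small box around $\lambda = 0$ I would separate $f$ into its constant and mean-zero parts. If $f$ is a nonzero constant, then $\nabla f = 0$, hence $P = 0$, and the first indicial equation reduces to $a\lambda^2 B = 0$, forcing $\lambda = 0$. Otherwise the Poincar\'e inequality $A \ge \lambda_1 B$ together with $|\lambda|^2(aB+D) = A + C \ge \lambda_1 B$ and the cone bound on $\Re\lambda$ above yields a quantitative lower bound on $|\Im\lambda|$ of order $\sqrt{\lambda_1}$; any nonzero indicial root in the prescribed box about $0$ must therefore satisfy $\Re\lambda = 0$. The final horizontal-strip assertion follows by combining these two conclusions: the real axis carries only $\lambda = 0$, and the cone bound prevents non-real indicial roots from approaching the real axis.

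The main obstacle is recovering the sharp constants. The elementary Cauchy--Schwarz/AM--GM estimates give only $|\Re\lambda| \le \tfrac{n-2}{n}|\lambda|$, whereas the precise cone constant $\tfrac{4(n-2)^2}{n-1}$ in the statement appears to demand a finer decomposition --- most naturally a Hodge-type splitting $Y = \nabla \psi + Z$ with $\delta^{\zg} Z = 0$, so that the cross term $P$ only sees the gradient part of $Y$ --- together with the quadratic-form identity~\eqref{redquadcvl} to improve the lower bound on $C$ in terms of $\|\delta^{\zg} Y\|^2$. This is where the proof becomes computationally delicate, but conceptually it remains a refinement of the inner-product argument above.
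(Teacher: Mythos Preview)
Your real-quadratic trick is correct and is a genuinely different (and cleaner) route to the sectorial part than the paper's. The paper derives five $L^2$ identities --- your two pairings together with three polarization identities of the type $\|\alpha\|^2+\|\beta\|^2-\|\alpha\pm\beta\|^2=\mp 2\Re\langle\alpha,\beta\rangle$ --- and takes the weighted combination $\tfrac{\bar\lambda}{\lambda}(\mathrm{I})+(\mathrm{II})-(\text{three polarizations})$ to arrive at the key scalar identity
\[
-1=(n-1)\frac{\lambda^{2}}{|\lambda|^{2}}\,\|f\|^{2}+i\,\frac{\lambda}{|\lambda|^{2}}(n-2)\,\langle\psi,f\rangle,\qquad \psi:=i\lambda\bigl(1-\tfrac1n\bigr)f+\tfrac1n\,\delta Y,
\]
valid off the imaginary axis. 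Your single quadratic bypasses all of this for the sector: the Vieta bound $|\Re\lambda|\le\tfrac{n-2}{n}|\lambda|$, i.e.\ $|\Im\lambda|\ge\tfrac{2\sqrt{n-1}}{n}|\lambda|$, is actually \emph{stronger} than what the paper proves. (The sectorial constant in the proposition should evidently be $\tfrac{n-1}{4(n-2)^2}$, not its reciprocal --- that is what the paper's proof delivers --- and $\tfrac{n-1}{4(n-2)^2}<\tfrac{2\sqrt{n-1}}{n}$ for every $n\ge 3$.) So your last paragraph misdiagnoses the obstacle: no Hodge splitting or refinement is needed for the cone; your constant already suffices.

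The real gap is the box near $\lambda=0$. Your Poincar\'e step gives only $|\lambda|^2(aB+D)\ge\lambda_1 B$, and this yields no uniform lower bound on $|\lambda|$ (or on $|\Im\lambda|$) once $D/B$ is large; nothing in your quadratic prevents a small off-axis root such as $\lambda=\varepsilon+10\varepsilon\,i$. The paper does not run Poincar\'e on $|\lambda|^2$ at all. Instead, from its combined identity it first extracts $\|\nabla f\|^2=|\lambda|^2\|Y\|^2$ (this is \emph{not} a consequence of the sum $(\mathrm{I})+(\mathrm{II})$ alone --- one needs the separate imaginary parts, or equivalently one of the polarization identities) and $\|\psi\|\le 2|\Im\lambda|$, normalizes $\|Y\|=1$, and then bounds the two terms $J_1,J_2$ in the displayed identity above by $1/2$ each using the Poincar\'e inequality $\|f\|^2\le\lambda_1^{-1}\|\nabla f\|^2=\lambda_1^{-1}|\lambda|^2$. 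That is how the explicit $\lambda_1$-dependent box arises. To repair your argument you must either import this second identity or, at minimum, take the difference $(\mathrm{I})-(\mathrm{II})$ as well as the sum and combine their real and imaginary parts; only then do you recover $A=|\lambda|^2D$, which is what makes the Poincar\'e step bite. Finally, neither approach controls purely imaginary roots near $0$; both the paper and you must invoke discreteness of the indicial set for the horizontal-strip clause.
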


The excluded region, pictured in Figure~\ref{F5II11.1}, is the union of a sector containing the
positive and negative real axes and the region inside a disc and between two horizontal lines.
\begin{figure}
\begin{center}
\includegraphics[height=5cm]{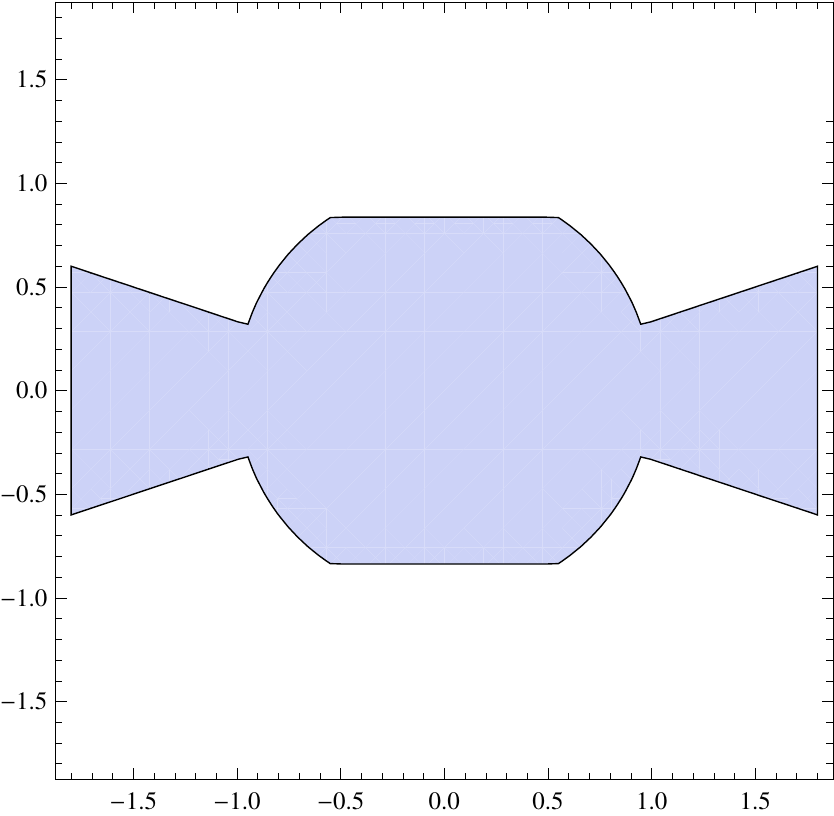}
\end{center}
\caption{\label{F5II11.1}There are no indicial roots in the shaded region except at the origin.}
\end{figure}

\medskip

\proof  We first establish the much simpler fact that the only indicial root on the real line is $0$.

First, if $\lambda = 0$, then $\Delta_{\zg} f = 0$ and $\widetilde{\cvl} Y = 0$, so $f$ is constant
and by \eqref{redquadcvl}, $\tilde{C}(Y) = 0$ and $\delta^{\zg} Y = 0$. Thus $Y$ is a Killing vector on $N$, if any.

To proceed further, we establish some identities satisfied by solutions of $I_{\lambda}(\hat{f},\hat{Y})=0$.
For convenience henceforth, we omit the hats from $f$ and $Y$, and also omit the superscript $\zg$ from $\delta$.

We first take the $L^2(N)$ inner product of the first equation in \eqref{indfam} and integrate by parts to get
\begin{equation}
||\nablah f||^2 + 2\left(1-\frac{1}{n}\right) \lambda^2 ||f||^2 + i\lambda \left( 1 - \frac{2}{n}\right) \langle \delta Y, f \rangle = 0,
\label{1cir}
\end{equation}
and
\begin{equation}
\frac12 || \tilde{C}(Y)||^2 + \frac{2}{n^2} ||\delta Y||^2 + \lambda^2 ||Y||^2 + i\lambda \left( \frac{2}{n} - 1\right) \langle
\nablah f, Y \rangle = 0.
\label{2cir}
\end{equation}
In this second equation we used \eqref{redquadcvl}.

Next, recalling that we are using the sesquilinear inner product which is complex antilinear in the second factor, integration-by-parts leads to the following three identities:
\begin{equation}
|| \nablah f||^2 + |\lambda|^2 ||Y||^2 - || \nablah f + i \bar{\lambda} Y ||^2- i\lambda \langle \nablah f, Y \rangle
+ i \bar{\lambda} \langle Y, \nablah f \rangle = 0,
\label{3cir}
\end{equation}
\begin{equation}
\begin{split}
\frac12 ||\tilde{C}(Y)||^2 & + \frac{2}{n}\left(1 - \frac{1}{n} \right) |\lambda|^2 ||f||^2 -
\frac12 \left\| \tilde{C}(Y) - \frac{2i\lambda}{n} f \zg \right\|^2 \\ & - \frac{2i\bar{\lambda}}{n^2} \langle \delta Y, f \rangle +
\frac{2i\lambda}{n^2} \langle f , \delta Y \rangle = 0,
\end{split}
\label{4cir}
\end{equation}
and
\begin{equation}
\begin{split}
\frac{2}{n^2} & ||\delta Y||^2 + 2|\lambda|^2 \left( 1 - \frac{1}{n}\right)^2 ||f||^2 - 2 \left\|
i\lambda \left(1 - \frac{1}{n}\right) f + \frac{1}{n} \delta Y \right\|^2 \\
& + \frac{2i\lambda}{n} \left(1 - \frac{1}{n}\right) \langle f, \delta Y \rangle - \frac{2i\bar{\lambda}}{n} \left(1 - \frac{1}{n}\right)
\langle \delta Y, f \rangle = 0.
\end{split}
\label{5cir}
\end{equation}
For $\lambda\ne 0$ we now form the combination
$$
\frac{\overline \lambda}{{\lambda}} \eqref{1cir} + \eqref{2cir} - \eqref{3cir} - \eqref{4cir} - \eqref{5cir} = 0
$$
to get
\begin{equation}
\begin{split}
& \frac12 \left\| \tilde{C}(Y) - \frac{2i\lambda}{n} f \zg \right\|^2 + 2 \left\|i\lambda \left(1 - \frac{1}{n}\right) f +
\frac{1}{n} \delta Y \right\|^2  \\
& + || \nablah f + i \bar{\lambda} Y ||^2 +
\left( \frac{\bar{\lambda}}{\lambda} - 1 \right) ||\nablah f||^2 + (\lambda^2 - |\lambda|^2) ||Y||^2 = 0.
\end{split}
\label{identity}
\end{equation}
For simplicity below, we write this as
\begin{equation}
\calP + \calQ + \calR + \left( \frac{\bar{\lambda}}{\lambda} - 1 \right) ||\nablah f||^2 + (\lambda^2 - |\lambda|^2) ||Y||^2 = 0,
\label{identitymod}
\end{equation}
where the first three terms here correspond to the first three terms of \eqref{identity}, in their respective order.

Suppose now that $\lambda \in \RR \setminus \{0\}$. Then $\bar{\lambda}/\lambda = 1$
and $\lambda^2 = |\lambda|^2$, so we deduce from \eqref{identity} that $\nablah f = -i \bar{\lambda} Y$
and $\delta Y = -i\lambda (n-1) f$. Together these give $\Delta_\zg f + (n-1)|\lambda|^2 f = 0$,
whence $f = 0$ since $\Delta_\zg$ is nonnegative. This implies, in turn, that $Y = 0$. Hence
there are no nonzero real indicial roots.

Now we proceed to study indicial roots off the real line.  As a first step, we note two identities,
which are {\it only valid when $\mbox{arg}\, \lambda \neq \pi/2, 3\pi/2$}. The first, obtained by
taking the imaginary part of \eqref{identity}, states that
\begin{equation}
||\nablah f ||^2  = |\lambda|^2 ||Y||^2.
\label{id1}
\end{equation}
(The reason this does not hold when $\lambda \in i \RR$ is that this imaginary part has an overall
factor of $\sin (2\arg \lambda)$.)
Next, take the real part of \eqref{identity} and substitute \eqref{id1} to get
\begin{equation}
\calP + \calQ + \calR = 4 (\Im( \lambda))^2 ||Y||^2.
\label{id2}
\end{equation}





If $\lambda \in \CC \setminus (\RR \cup i\RR)$, then it is clear from \eqref{id1} and
\eqref{indfam} that if $(f,Y)$ is a nontrivial solution to $I_\lambda(\cvl)(f,Y) = (0,0)$ then
both $f$ and $Y$ are nontrivial.  Multiplying this solution by a constant, we assume that
\[
||Y|| = 1 \Rightarrow ||\nablah f|| = |\lambda|.
\]

On a compact manifold, integration by parts shows  that the first non-zero eigenvalue of the Laplacian is non-negative. An identical calculation applies to   $\widetilde{\cvl}$, leading to the same conclusion, compare \eq{redquadcvl}.
We note the following:

\begin{lemma}
Suppose $I_\lambda(\cvl)(f,Y) = 0$ with $\lambda \neq 0$. Then, if $\lambda_1>0$ and $\nu_1>0$ are the lowest
nonzero eigenvalues of $\Delta_{\zg}$ and $\widetilde{\cvl}$, we have that
\begin{eqnarray*}
 \|f\|^2 & \leq & \frac {1}{\lambda_1} \|\nablah f\|^2,  \ \mbox{and} \\
 \|Y\|^2 & \leq &  \frac {1}{\nu_1} \left( \frac{1}{2}\|\wL Y \|^2 + \frac{2}{n^2}\|\delta Y\|^2 \right).
\end{eqnarray*}
\label{eigenlemma}
\end{lemma}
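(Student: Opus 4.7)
The two inequalities are Poincaré-type estimates, so the plan is to show that $f$ is $L^2$-orthogonal to the kernel of $\Delta_{\zg}$ and that $Y$ is $L^2$-orthogonal to the kernel of $\widetilde{\cvl}$, and then invoke the spectral theorem on each factor of the product $N$.

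First, recall from \eqref{indfam} that the equation $I_\lambda(\cvl)(f,Y) = 0$ is equivalent to the coupled system $\hat w = 0$ and $\hat W = 0$ on $N$. For the inequality on $f$, integrate the equation $\hat w = 0$ over the compact manifold $N$. Both $\int_N \Delta_{\zg} f \, dV_{\zg}$ and $\int_N \delta^{\zg} Y \, dV_{\zg}$ vanish by the divergence theorem, so what remains is
\[
2\Bigl(1-\frac{1}{n}\Bigr)\lambda^2 \int_N f \, dV_{\zg} = 0.
\]
Since $\lambda \neq 0$ and $n \geq 3$, this forces $\int_N f \, dV_{\zg} = 0$, i.e., $f$ is orthogonal to the constants, which span $\ker \Delta_{\zg}$. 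The Poincaré inequality on $(N,\zg)$ then gives $\|f\|^2 \leq \lambda_1^{-1} \|\nablah f\|^2$.

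For the inequality on $Y$, first recall that by the quadratic form identity \eqref{redquadcvl}, a vector field $Y_0$ lies in $\ker \widetilde{\cvl}$ if and only if both $\tilde{C}(Y_0) = 0$ and $\delta^{\zg} Y_0 = 0$. Now take the $L^2(N)$ inner product of $\hat W = 0$ with an arbitrary $Y_0 \in \ker \widetilde{\cvl}$. The term $\langle \widetilde{\cvl} Y, Y_0 \rangle$ vanishes by self-adjointness since $\widetilde{\cvl} Y_0 = 0$, and the term $\langle \nablah f, Y_0 \rangle = \langle f, \delta^{\zg} Y_0 \rangle$ vanishes because $\delta^{\zg} Y_0 = 0$. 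What remains is $\lambda^2 \langle Y, Y_0 \rangle = 0$, and since $\lambda \neq 0$ we conclude $Y \perp \ker \widetilde{\cvl}$ in $L^2(N)$.

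Finally, since $\widetilde{\cvl}$ is a self-adjoint nonnegative operator on the compact manifold $N$ with discrete spectrum, the min-max characterization gives $\langle Y, \widetilde{\cvl} Y \rangle \geq \nu_1 \|Y\|^2$ for every $Y$ orthogonal to $\ker \widetilde{\cvl}$. Rewriting the left-hand side via \eqref{redquadcvl} yields the second claimed estimate. The only even mildly subtle step is verifying that the coupling term $\langle \nablah f, Y_0 \rangle$ drops out, but this is immediate from the fact that elements of $\ker \widetilde{\cvl}$ are both conformally $\tilde C$-trivial and divergence-free, so I expect no real obstacle.
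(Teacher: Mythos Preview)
Your proof is correct and follows essentially the same strategy as the paper: integrate the first equation of \eqref{indfam} over $N$ to show $f\perp 1$, pair the second equation against an arbitrary $Y_0\in\ker\widetilde{\cvl}$ (using that such $Y_0$ is divergence-free) to show $Y\perp\ker\widetilde{\cvl}$, and then apply the spectral gap inequality for each operator. The only cosmetic difference is that the paper writes out the eigenfunction expansion explicitly, whereas you invoke the Poincar\'e/min-max inequality directly.
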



\proof On the compact manifold $N$ we have the decomposition: $f= \sum_{k=0}^{\infty} f_k$
where $f_k$ is an eigenfunction of $\Delta_\zg$ associated with
the eigenvalue $\lambda_k$ and where $\{\lambda_i\}_{i \geq 0}$
is a strictly increasing sequence with $\lambda_0 = 0$. The
kernel of $\Delta_\zg$ is the space of  constant functions;
integrating the first of  (\ref{indfam}) with $\hat w =0$ one gets $\langle f, 1
\rangle = 0$, whence $f_0=0$. Then:
$$\|f\|^2 = \sum_{k=1}^{\infty} \|f_k\|^2  \leq \frac{1}{|\lambda_1|}
\sum_{k=1}^{\infty}|\lambda_k| \|f_k\|^2 = -\frac{1}{|\lambda_1|}
\langle \Delta_{zg} f, f\rangle = \frac{1}{|\lambda_1|}\|\mcD  f\|^2
 \;.
$$

The second inequality is proved in the same way, after checking
that for all $Z$ such that $\widetilde{\cvl }Z =0$ we have
$\langle Y, Z\rangle = 0$; this is done by multiplying with $Z$ the second of  (\ref{indfam}), after setting $\hat W =0$ there, integrating over $\spOmega $, and
using the fact that $\mdiv_\zg Z=0$ for $Z$ in the kernel of
$\wL$. \qed


\medskip

Returning to the proof of Proposition~\ref{prop:main},  define
$$
\psi:=\mimath  \lambda(1-\frac{1}{n})f + \frac{1}{n}\delta Y,
$$
so that, by \eqref{id2},
\bel{psinorm}
\|\psi\| \le 2 |\Im(\lambda)|.
\ee
Inserting $\delta Y = n\psi - i\lambda (n-1) f$ into \eqref{1cir}, and recalling \eqref{id1}  gives, after some simplification,
\begin{equation}
0 = |\lambda|^2 + \lambda^2 (n-1) ||f||^2 + i\lambda (n-2) \langle \psi, f \rangle = 0.
\label{idlam}
\end{equation}
%
Dividing by $|\lambda|^2$ we find that
\bel{strangeid}
 -1 = (n-1)\frac{\lambda^2}{|\lambda|^2} \|f\|^2  +
\mimath  \frac{\lambda }{|\lambda|^2 } (n-2) \langle  \psi, f \rangle.
\ee

We claim that this equation has no solutions when $\lambda$ is sufficiently small. Indeed, let us denote
the two terms on the right by $J_1$ and $J_2$. Then, using Lemma~\ref{eigenlemma} and \eqref{psinorm} gives
\[
|J_2| \leq 2 (n-2) \frac{|\Im (\lambda)|}{ |\lambda|}  \frac{||\nablah f||}{\sqrt{\lambda_1}} =
2 (n-2) \frac{|\Im( \lambda)|}{\sqrt{\lambda_1}}.
\]
%
Hence the last term in \eq{strangeid} will have norm smaller than $1/2$ if
$$
|\Im (\lambda)  | < \frac{{\sqrt{\lambda_1}}}{4(n-2)}
 \;.
$$
Similarly, $|J_1| < 1/2$ provided
$$
 {|\lambda|^2} <  \frac{\lambda_1}{2(n-1)}
 \;.
$$
We conclude that there are no indicial roots in the intersection of these regions, which is the intersection
of a ball with a slab.

We turn to showing that $\lambda$ cannot lie in the sectorial part of this region. Suppose that
$\arg \lambda \in (-\pi/4, \pi/4) \cup (3\pi/4, 5\pi/4)$.  Then $J_1$ has positive real part,
while $|J_2| \leq 2(n-2)\|f\|$.
 Thus, for
$$
\|f\|\le \frac 1 {2 (n-2)}
$$
the equality \eq{strangeid} cannot be true, since $J_2$ must have modulus greater than one to
compensate for the positive real part of $J_1$.

Next, note that the three points $(-1, 0, J_1)$ form a triangle in the complex plane which,
for $\lambda$ with argument as above, has angle larger than $\pi/2$ at $0$. This implies that
$|1+J_1| > |J_1| = (n-1)\|f\|^2$.  Thus
\[
|J_2| = \frac{|J_2|}{|J_1|} |J_1| < \frac{|J_2|}{|J_1|} |1 + J_1|,
\]
so \eqref{strangeid} is impossible if $|J_2| / |J_1| < 1$. We have already ruled out the
possibility that $||f|| \leq \frac{1}{2(n-2)}$, so we can assume the opposite. Hence
\[
\frac{|J_2|}{|J_1|} \leq \frac{2(n-2) (|\Im( \lambda)|/|\lambda|)||f||}{(n-1)||f||^2}
< \frac{4(n-2)^2}{n-1} \frac{|\Im( \lambda)|}{|\lambda|},
\]
and this is less than or equal to one if $|\Im( \lambda)|/|\lambda| \le (n-1)/4(n-2)^2$.
%

%

Finally, the estimates above do not give information about the indicial roots on the imaginary
axis.  As we have already described, it is known that the set of all indicial roots is discrete
in the plane, so from this general result there is necessarily a strip $|\Im( \lambda)| < \eta$
such that the only indicial root in it is $\lambda = 0$. This is sufficient for the mapping properties
we describe later, but is not particularly satisfactory given the explicit nature of the other bounds above.

This proves Proposition~\ref{prop:main}.
\qed

\section{The conformal vector Laplacian  on $\mathbb Z$-periodic cylinders}
Let us now consider a space $X$ which is the cyclic cover of a compact manifold $\zX$.
This means that the map $X \to \zX$ induces a surjection in fundamental group with kernel
isomorphic to $\mathbb Z$. Choose a generator $\Gamma$ for the group of deck transformations
on $X$; this can be thought of as a translation, and $X$ itself as being cylindrical,
although it need not be homeomorphic to a product $\RR \times N$. If $\zX$ carries a Riemannian
metric $\zg$ and a TT tensor $\zK$, then we can lift these to a metric $g$ and TT tensor $K$ on $X$,
and $(X,g,K)$ is an initial data set if and only if $(\zX, \zg, \zK)$ is.

There are already some interesting examples when $X$ is a product $\RR \times N$ and $g$ is periodic.
Indeed, the paper \cite{PTCMazzeoCylindrical} classifies the ``$1$-dimensional'' solutions of the
Lichnerowicz equation in this setting, i.e.\ the solutions on a product cylinder which depend only
on $x \in \RR$, at least assuming that $||K||_g$ is constant. There is an interesting family of periodic
solutions in this case where the metric $g = w^{\frac{4}{n-2}}( dx^2 + h)$ is conformal to the product metric
on $\RR \times (N,h)$, and the conformal factor $w(x)$ is periodic. In the special case $K \equiv 0$,
these are the well-known Delaunay solutions for the Yamabe equation; when $K \neq 0$, they are
a new family of deformations of these which we call the constraint Delaunay solutions. There are
many other one-dimensional solutions one might consider, for example when $||K||_g$ is periodic;
these more general solutions were not studied in \cite{PTCMazzeoCylindrical}, but it
would certainly be interesting to do so.

In any case, we now describe some linear analysis describing the mapping properties of the conformal vector
Laplacian  $\cvl$ for any such periodic manifold. These results will be used in \S 6 to find a rich class of
TT tensors on manifolds with asymptotically periodic ends. As in the cylindrical setting, the emphasis is
on developing a criterion for determining when $\cvl$ is Fredholm on a given weighted Sobolev space.
There is an analogue of the notion of indicial roots in this setting which determines the allowable weight
parameters. This material is somewhat less standard than the corresponding theory for cylinders, so we
describe it more carefully.  The discussion below is drawn from \cite{MPU1}.

Let $(X, g)$ be a $\mathbb Z$-periodic manifold, and $\Gamma$ the generator for the deck transformations
of the covering $X \to \zX$, as described above.  Choose a fundamental domain $F$ for this action which is
a smooth compact manifold with two boundary components, $\del_-F \cup \del_+ F$, where $\Gamma$ induces
a diffeomorphism between $F_-$ and $F_+$. The basic example to keep in mind is the cylinder $X = \RR \times N$,
where $F = [0,T] \times N$ and $\Gamma(x,y) = (x+T, y)$, $y \in N$.

We begin with the direct integral decomposition
\begin{equation}
L^2(X) = \int^{\oplus} L^2(F)_\theta\, d\theta,
\label{dirint}
\end{equation}
described in \cite[p.~290]{ReedSimonVol4}, where $L^2(F)_\theta$ consists of the space of $L^2$ functions $u$ on $F$
which satisfy $u(\Gamma(z)) = e^{i\theta} u(z)$ almost everywhere on $X$. We explain this theory for functions,
but it extends immediately to sections of tensor bundles. The equivalence of function spaces \eqref{dirint} is
defined through the Fourier-Laplace transform on $X$,
\[
u(z) \longmapsto \hat{u}(\hat{z},\theta) = \sum_{k=-\infty}^\infty e^{-ik\theta} u( \Gamma^k(z)).
\]
This is initially only defined for smooth functions $u$ which are rapidly decreasing on $X$, but is extended to
all of $L^2(X)$ using the Plancherel formula
\[
\int_X |u(z)|^2\, dV_g = \int_{0}^{2\pi} \int_{F} |\hat{u}(\hat{z}, \theta)|^2\, dV_{\zg}\, d\theta, \quad \hat{z} \in F.
\]
In fact, this Plancherel formula shows that \eqref{dirint} is an isometric equivalence.
The inverse Fourier-Laplace transform is given by
\[
u(z) = \frac{1}{2\pi i} \int_0^{2\pi} e^{i k \theta} \hat{u}(\hat{z} , \theta)\, d\theta,
\]
where $k \in \mathbb Z$ is determined by the fact that $z$ lies in the $k^{\mathrm{th}}$ translate of the fixed fundamental
domain $F$, $z \in \Gamma^k (F)$. Note that $\hat{u}(\hat{z}, \theta)$ satisfies $\hat{u}( \Gamma(\hat{z}), \theta)
= e^{i\theta} \hat{u} (\hat{z}, \theta)$.  In other words, we may regard $\hat{u}$ as a section of a flat line bundle
$V_\theta$ over $\zX$ which has holonomy $e^{i\theta}$; the $L^2$ sections of this bundle are precisely the elements
of $L^2(F)_{\theta}$. One further important observation is that the flat connection on this bundle is unitary
if and only if $\theta \in \RR$.

All we have done here is to recast in geometric language the classical Bloch-Floquet wave theory, as described
in \cite{ReedSimonVol4}.  Its use in geometry was initiated by Taubes \cite{TaubesPeriodic}, and developed
further in \cite{MPU1}.

If $\calL$ is an elliptic operator on $X$ which commutes with $\Gamma$, then it induces an operator $\calL_\theta$
acting on $\calC^\infty(\zX,V_\theta)$ for any $\theta \in \CC$. We suppose just to be definite that the degree
of $\calL$ is $2$.  If $\calL$ is symmetric, then for $\theta \in \RR$,
$\calL_\theta$ uniquely determines a self-adjoint operator on $L^2(\zX, V_\theta)$, which thus has discrete (and real)
spectrum $\{\lambda_j(\theta)\}$, with $\lambda_j(\theta) \nearrow \infty$ as $j \to \infty$. These functions
satisfy $\lambda_j(\theta) = \lambda_j(2\pi - \theta)$ provided $\overline{\calL u} = \calL \overline{u}$.
All of this can be used to prove that the spectrum of $\calL$ on  $L^2(X)$ consists of the union of
`bands' $\cup_j \{ \lambda_j(\theta): 0 \leq \theta \leq \pi\}$.

For functions $u$ which decay sufficiently rapidly on $X$ we can define the Fourier-Laplace transform also for
complex values of $\theta$. In particular, if $u \in \calC^\infty_0(X)$, then $\hat{u}(\hat{x}, \theta)$ extends to be
an entire function of $\theta$ and the identity $\hat{u}(\hat{z}, \theta + 2\pi) = \hat{u}(\hat{z}, \theta)$ holds
for all $\theta \in \mathbb C$.  This defines $\calL_\theta$ as a holomorphic family of Fredholm operators,
and it is then a standard result in functional analysis that either $\calL_\theta$ is never invertible for any $\theta \in \CC$
or else $\calL_\theta^{-1}$ is a meromorphic family of operators, and the coefficients of the singular terms in the
Laurent expansion at each pole are all finite rank operators.  Note that to be in the second case, it
is necessary that $\calL_\theta$ have index zero for every $\theta$. We now say that $\theta_0$ is an indicial root of $\calL$
provided $\calL_{\theta_0}$ is not invertible, or equivalently, if $\calL_\theta^{-1}$ has a pole at $\theta_0$.
Since the index is zero, $\theta_0$ is an indicial root if and only if there exists a solution of
$\calL_{\theta_0} \hat{\phi} = 0$. Thus $\hat{\phi}$ corresponds to a function $\phi$ on $X$ which satisfies
$\phi(\Gamma(z)) = e^{i\theta_0} \phi(z)$, and if $\mbox{Im}\, \theta_0 \neq 0$, then this solution grows exponentially
in one direction and decays exponentially in the other on $X$. Thus, as in the cylindrical setting, indicial roots determine
the precise rates of exponential growth or decay of solutions of the homogeneous equation $\calL u = 0$, and these in turn
dictate the weighted Sobolev spaces on which $\calL$ is Fredholm.

We now state the mapping properties on weighted Sobolev spaces. First, let $H^s_g(X)$ denote the usual Sobolev
space relative to the periodic metric $g$ on $X$. Denote by $\rho \in \calC^\infty(X)$ a function on $X$ which satisfies
$\rho(\Gamma(z)) = \rho(z) + k$, normalized so that $\rho(z)$ vanishes at some point in the given
fundamental domain $F$. We then define the weighted Sobolev spaces $e^{\delta \rho} H^s_g(X)$ for any $\delta \in \RR$.
The following general result is proved in \cite{MPU1}:
\begin{prop}
With all notation as above, the mapping
\begin{equation}
\calL: e^{\delta \rho} H^{s+2}(X) \longrightarrow e^{\delta \rho} H^s(X)
\label{mapper}
\end{equation}
is Fredholm for all $s \in \RR$ if and only if $\delta \neq \mathrm{Im}\, \theta_j$ where $\theta_j$ is an indicial
root of $\calL$.
\label{mainthmperiodic}
\end{prop}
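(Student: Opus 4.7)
The plan is to mimic the argument for product cylinders given in the preceding section, with the Fourier transform in $x$ replaced by the Fourier--Laplace transform on the $\mathbb{Z}$-cover. The first move is to reduce to the unweighted case by conjugation: $\calL$ acting on $e^{\delta\rho}H^{s+2}$ is equivalent to the conjugated operator $\calL^{(\delta)} := e^{-\delta\rho}\calL e^{\delta\rho}$ acting on $H^{s+2}$. A direct computation (using that $\rho$ is a ``linear'' function on the cover, i.e.\ $\rho\circ\Gamma=\rho+1$) shows that $\calL^{(\delta)}$ is still $\Gamma$-invariant and still elliptic, and its Fourier--Laplace family satisfies $(\calL^{(\delta)})_{\theta} = \calL_{\theta + i\delta}$. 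So the indicial roots of $\calL^{(\delta)}$ are just the translates $\theta_j - i\delta$, and the question reduces to: on which horizontal lines $\{\mathrm{Im}\,\theta = \delta\}$ is $\calL_{\theta}$ pointwise invertible, and does this yield Fredholmness of the global operator on $L^2$-based spaces?

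The ``if'' direction rests on the Plancherel identity together with analytic Fredholm theory. By assumption the line $\{\mathrm{Im}\,\theta=\delta\}$ contains no indicial root of $\calL$, so $\calL_{\theta}^{-1}$ exists as a bounded map $H^s(\zX,V_{\theta})\to H^{s+2}(\zX,V_{\theta})$ for every $\theta$ on this line. Since the family is periodic in $\mathrm{Re}\,\theta$ with period $2\pi$ and depends holomorphically on $\theta$, the norm $\|\calL_{\theta}^{-1}\|_{H^s\to H^{s+2}}$ is uniformly bounded on this line by compactness of the circle and the meromorphic structure of $\calL_\theta^{-1}$. Applying the inverse Fourier--Laplace transform as in the cylindrical case produces a bounded two-sided inverse
\[
f \longmapsto \frac{1}{2\pi}\int_{0}^{2\pi} e^{ik\theta}\,\calL_{\theta+i\delta}^{-1}\hat{f}(\hat{z},\theta+i\delta)\,d\theta,
\]
which shows that \eqref{mapper} is in fact an isomorphism, hence Fredholm of index zero.

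For the converse, suppose $\delta = \mathrm{Im}\,\theta_0$ for some indicial root $\theta_0$. Then there is a non-trivial $\hat\phi\in\calC^\infty(\zX,V_{\theta_0})$ with $\calL_{\theta_0}\hat\phi = 0$; unfolded to $X$ this gives a smooth $\phi$ with $\calL\phi = 0$ and $\phi(\Gamma(z)) = e^{i\theta_0}\phi(z)$, so that $|\phi|$ grows like $e^{\delta\rho}$ in one direction and decays like $e^{\delta\rho}$ in the other along the cylinder. Multiplying by slowly-varying cutoffs $\chi_j$ of disjoint supports that each cover many fundamental domains produces a sequence $u_j = \chi_j\phi$ with $\|u_j\|_{s+2,\delta}\to\infty$ while $\|\calL u_j\|_{s,\delta} = \|[\calL,\chi_j]\phi\|_{s,\delta}$ stays bounded (the commutator lowers the growth by one $\rho$-derivative of $\chi_j$, which is small). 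This exhibits a Weyl sequence showing that \eqref{mapper} fails to have closed range, hence is not Fredholm.

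The main obstacle is the uniform boundedness of $\calL_\theta^{-1}$ along the line $\mathrm{Im}\,\theta = \delta$; the pointwise invertibility follows directly from the definition of indicial root, but for the Plancherel-based construction of a global inverse one needs the meromorphic family $\theta\mapsto \calL_\theta^{-1}$ to be continuous (and thus bounded) on a compact periodic interval avoiding its poles. Once this uniform bound is in hand, the Plancherel identity converts pointwise invertibility into global invertibility on weighted $L^2$ spaces, and the standard elliptic regularity argument extends this to $H^s$-based spaces for all $s$.
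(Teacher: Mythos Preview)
The paper does not give its own proof of this proposition; it is quoted from \cite{MPU1} (``The following general result is proved in \cite{MPU1}''). Your argument is the natural transplant of the product-cylinder argument in \S4.3 to the $\mathbb Z$-periodic setting, replacing the Fourier transform by the Fourier--Laplace transform introduced earlier in \S5, and is essentially the argument carried out in the cited reference. The one new ingredient you correctly identify is that the $2\pi$-periodicity of $\theta\mapsto\calL_\theta$ in $\mathrm{Re}\,\theta$ reduces the uniform bound on $\|\calL_\theta^{-1}\|$ along a pole-free horizontal line to continuity on a compact circle; in the cylindrical case this uniformity had to come instead from the large-$|\lambda|$ behaviour of the indicial family. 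Your conclusion that the map is not merely Fredholm but an isomorphism is also correct and parallels the last assertion of the proposition in \S4.3.

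One small point of care in the ``only if'' direction: the solution $\phi$ with $\phi\circ\Gamma=e^{i\theta_0}\phi$ has $|\phi|\sim e^{-(\mathrm{Im}\,\theta_0)\rho}$, so it is borderline for $e^{-\mathrm{Im}\,\theta_0\,\rho}L^2$ rather than $e^{\mathrm{Im}\,\theta_0\,\rho}L^2$. This is harmless because the indicial set is closed under $\theta\mapsto\bar\theta$ (self-adjointness of $\calL$ gives $(\calL_\theta)^*=\calL_{\bar\theta}$, and the family has index zero), so the set of imaginary parts is symmetric about $0$ and the distinction disappears. With that adjustment your Weyl-sequence construction goes through exactly as in the cylindrical case.
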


Since the set of indicial roots of $\calL$ is invariant when translated by any integer multiple of
$2\pi$, we may restrict attention to only those roots in the vertical strip $S := \{\theta \in \CC: -\pi
\leq \mbox{Re}\, \theta < \pi\}$. Since the set of indicial roots is discrete in $\CC$ and
$2\pi$-periodic, it follows that the set of imaginary parts of these roots is discrete in $\RR$, from which we obtain the
\begin{Corollary}
The mapping \eqref{mapper} is Fredholm for every $\delta \in \RR \setminus \Lambda$, where $\Lambda$ is the discrete
set of imaginary parts of indicial roots. In particular, there is a value $\delta_* > 0$ so that \eqref{mapper} is Fredholm
provided $0 < |\delta| < \delta_*$, and if there are no indicial roots on the real line, then \eqref{mapper} is Fredholm
for every $\delta$ with $|\delta| < \delta_*$.
\end{Corollary}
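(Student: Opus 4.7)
The corollary is essentially a packaging statement: everything follows once one has Proposition~\ref{mainthmperiodic} together with the discreteness of $\Lambda$, so the plan is short.

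First, I would note that the Fredholm criterion $\delta \notin \Lambda$ is just a restatement of Proposition~\ref{mainthmperiodic}, once we know that $\Lambda$, the set of imaginary parts of indicial roots, is a meaningful (discrete) subset of $\RR$. So the real work is to verify that $\Lambda$ is discrete.

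For this, I would argue as follows. By the earlier discussion, the full set of indicial roots is discrete in $\CC$ and invariant under $\theta \mapsto \theta + 2\pi$, so it suffices to work with its intersection with the closed strip $\bar S = \{ -\pi \le \mathrm{Re}\,\theta \le \pi\}$. Suppose, for contradiction, that $\Lambda$ is not discrete in $\RR$: then there exist distinct imaginary parts $y_n \in \Lambda$ with $y_n \to y_\infty \in \RR$. Pick indicial roots $\theta_n \in \bar S$ with $\mathrm{Im}\,\theta_n = y_n$. These lie in the compact rectangle $[-\pi,\pi] \times [y_\infty - 1, y_\infty + 1]$ for $n$ large, which contains only finitely many indicial roots by discreteness in $\CC$. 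Hence $\{y_n\}$ takes only finitely many values, contradicting that the $y_n$ are distinct and converge to $y_\infty$. This shows $\Lambda$ is discrete in $\RR$.

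Next, I would deduce the existence of $\delta_*$. Since $\Lambda$ is discrete in $\RR$, the point $0$ has a neighborhood $(-\delta_*, \delta_*)$ with $\Lambda \cap (-\delta_*, \delta_*) \subseteq \{0\}$: one simply sets $\delta_* = \mathrm{dist}(0,\Lambda\setminus\{0\})$, which is positive. Thus $\delta \notin \Lambda$ whenever $0 < |\delta| < \delta_*$, and Proposition~\ref{mainthmperiodic} gives that \eqref{mapper} is Fredholm on this punctured interval. Finally, "no indicial roots on the real line" means no $\theta_j$ with $\mathrm{Im}\,\theta_j = 0$, i.e.\ $0 \notin \Lambda$; then $\delta_* = \mathrm{dist}(0,\Lambda) > 0$ works for the full interval $|\delta| < \delta_*$.

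There is no genuine obstacle here — the only point that requires a moment's care is the passage from discreteness of the indicial roots in $\CC$ to discreteness of their imaginary parts in $\RR$, which would fail without the $2\pi$-periodicity (which lets one restrict to a vertical strip of finite width). Everything else is a direct application of Proposition~\ref{mainthmperiodic}.
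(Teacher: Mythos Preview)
Your proof is correct and follows essentially the same approach as the paper: both use the $2\pi$-periodicity of the indicial roots to reduce to a vertical strip of finite width, then invoke discreteness in $\CC$ to conclude that the set of imaginary parts is discrete in $\RR$, from which the existence of $\delta_*$ is immediate. The paper states this more tersely in the sentence preceding the Corollary, while you spell out the compactness argument explicitly, but the substance is identical.
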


We now specialize this analysis to the conformal vector Laplacian $\cvl$. The specific issue we wish to address
is the existence of indicial roots on the real line.
\begin{lemma}
Suppose that the vector field $Y$ on the periodic manifold $X$ lies in $L^2(F)_\theta$, i.e.\ $\Gamma_* Y = e^{i\theta} Y$, and
satisfies the equation $\cvl Y = 0$. If $\theta \in \RR$, then $Y$ is a conformal Killing field, $C(Y) = 0$.
\end{lemma}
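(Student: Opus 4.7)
The plan is to exploit the factorization $\cvl = C^*C$ recorded in \eqref{factorvl}, which on a closed manifold would immediately give $C(Y) = 0$ from $\cvl Y = 0$ via the standard pairing $\langle \cvl Y, Y\rangle = \|C(Y)\|^2$. The only thing preventing a direct application here is that $X$ is non-compact, so we must work on a fundamental domain and control the resulting boundary terms. The crucial observation is that the hypothesis $\theta \in \RR$ means that $Y$ transforms by a \emph{unitary} phase under the deck action: $|e^{i\theta}| = 1$. Consequently the pointwise Hermitian quantities $|Y|^2$, $|C(Y)|^2$ and the pairing used in the integration by parts are all $\Gamma$-invariant, and hence descend to the compact base $\zX$ (equivalently, they are well-defined on any fundamental domain independently of choices).

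Concretely, I would fix a fundamental domain $F$ with $\del F = \del_-F \cup \del_+F$ identified via $\Gamma$, use elliptic regularity to ensure $Y$ is smooth, and pair $\cvl Y = C^*C(Y) = 0$ Hermitianly with $Y$ over $F$. Integration by parts produces
\[
0 = \int_F \langle \cvl Y, Y\rangle\, dV_g = \int_F |C(Y)|^2\, dV_g + \mathcal{B}(\del_+F) - \mathcal{B}(\del_-F),
\]
where $\mathcal{B}$ denotes the sesquilinear boundary integrand formed from $C(Y)$ contracted with the outward normal and paired with $\overline{Y}$. Under the identification $\del_-F \xrightarrow{\Gamma} \del_+F$, both $Y$ and $C(Y)$ pick up the factor $e^{i\theta}$, so the integrand on $\del_+F$ equals $e^{i\theta}\overline{e^{i\theta}} = 1$ times the integrand on $\del_-F$; since the induced outward normals are oppositely oriented, $\mathcal{B}(\del_+F) = \mathcal{B}(\del_-F)$ and the boundary contributions cancel. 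This forces $\int_F |C(Y)|^2 = 0$, hence $C(Y) \equiv 0$ on $F$, and by equivariance on all of $X$.

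The only subtle step is the boundary cancellation, and it is precisely where the reality assumption on $\theta$ is used: if $\mathrm{Im}\,\theta \neq 0$, the integrands on $\del_\pm F$ would differ by the nontrivial factor $|e^{i\theta}|^2 \neq 1$ and no such cancellation would occur. This is consistent with the fact that for non-real $\theta$, solutions of $\cvl Y = 0$ grow or decay exponentially along $X$ and there is no reason to expect them to be conformal Killing. No separate effort is needed to handle $C(Y)$ belonging to the right space: since $Y$ is smooth and the fundamental domain is compact, all integrals are finite.
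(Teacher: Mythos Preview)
Your proposal is correct and follows essentially the same argument as the paper: integrate the Hermitian pairing $\langle \cvl Y, Y\rangle$ over a fundamental domain, observe that the boundary contribution from $\del_\pm F$ acquires the factor $e^{i(\theta - \bar\theta)}$ under the deck identification, and note that this equals $1$ precisely when $\theta$ is real, so the boundary terms cancel and $\|C(Y)\|^2_{L^2(F)} = 0$. The paper phrases the same point in the language of the flat connection on $V_\theta$ being unitary exactly when $\theta \in \RR$, but the content is identical.
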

\begin{proof}
We have already noted that when $\theta \in \RR$, we may regard elements of $L^2(F)_\theta$ as sections of
a flat vector bundle over $\zX$, and that this flat connection is unitary. In more concrete terms, this means that
for (vector and tensor valued) sections of $V_\theta$, the integration by parts formula
\[
\int_{\zX} D_i A^{ij} \overline{Y}_j\, dV_g = - \int_{\zX} A^{ij} D_i \overline{Y}_j\, dV_g
\]
is still valid if both $A$ and $Y$ transform by $e^{i\theta}$ when pushed forward by $\Gamma$. The reason is that if
we think of this as an integration on the fundamental domain $F$, then after identifying the two boundary components
of $F$, the boundary term is
\[
\left(e^{i(\theta - \bar{\theta})}-1\right) \int_{\del_r F}  A^{ij} \overline{Y}_j \nu_i\, d\sigma_g,
\]
where $\nu$ is the unit normal to $\del_r F$ and $d\sigma_g$ is the volume measure on this boundary.
This vanishes if and only if $\theta \in \RR$. Hence, assuming this is the case, then we can use
\eqref{factorvl} to deduce that $\langle \cvl Y, Y \rangle = ||C(Y)||^2$, so that $C(Y) = 0$ as claimed.  \qed
\end{proof}

Since $C$ is a real operator, we can decompose $Y$ into its real and imaginary parts and hence deduce that
if $Y \in L^2(F)_\theta$ with $\theta \in \RR$ solves $\cvl Y = 0$, then there exists a real-valued, bounded
conformal Killing field on the periodic manifold $X$.  We summarize this in the
\begin{Corollary}
The operator $\cvl$ on the periodic manifold $(X,g)$ has an indicial root at $\theta_0 \in \RR$ if and only
if there exists a complex-valued conformal Killing vector field $Y$ on $X$ which transforms by
$\Gamma_* Y = e^{i\theta} Y$. In particular, if $X$ admits no bounded conformal Killing fields, then
$\cvl$ has no real indicial roots.
\end{Corollary}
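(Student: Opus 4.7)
The plan is to read the Corollary as an almost immediate consequence of the preceding Lemma combined with the factorization $\cvl = C^* C$ from \eqref{factorvl}; the only extra ingredient is the passage from a complex equivariant solution to a real bounded one.

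First I would unpack the defining property of indicial roots in this setting: by the discussion preceding the Lemma, $\theta_0 \in \CC$ is an indicial root of $\cvl$ precisely when $\cvl_{\theta_0}$ has nontrivial kernel, and under the Fourier--Laplace transform an element of this kernel is the same thing as a nonzero vector field $Y$ on $X$ satisfying both $\Gamma_* Y = e^{i\theta_0} Y$ and $\cvl Y = 0$. With this translation in hand, the two implications of the equivalence become transparent. For the forward direction, assume $\theta_0 \in \RR$ is an indicial root and let $Y$ be a corresponding equivariant solution of $\cvl Y = 0$; the preceding Lemma (which is exactly the place where the reality of $\theta_0$ is used, via the integration-by-parts formula with vanishing boundary term) gives $C(Y) = 0$, so $Y$ is a conformal Killing field with the required $\Gamma$-equivariance. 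For the converse, any such $Y$ with $C(Y) = 0$ automatically satisfies $\cvl Y = C^* C(Y) = 0$ by \eqref{factorvl}, hence produces a nontrivial element of $\ker \cvl_{\theta_0}$ and shows $\theta_0$ is an indicial root.

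For the ``in particular'' statement I would argue by contrapositive. Suppose $\cvl$ admits a real indicial root $\theta_0$ and let $Y$ be a corresponding complex equivariant conformal Killing field. The pointwise norm $|Y|_g^2$ is $\Gamma$-invariant, since $|e^{i\theta_0}| = 1$, so it descends to a continuous function on the compact quotient $\zX$ and is therefore uniformly bounded on $X$. Writing $Y = Y_1 + i Y_2$ with $Y_1, Y_2$ real vector fields, and using that $C$ is a real operator so that $C(Y_1) = C(Y_2) = 0$, one obtains two real bounded conformal Killing fields on $X$, at least one of which is nontrivial because $Y \not\equiv 0$. Taking the contrapositive yields the stated implication.

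I do not expect any real obstacle: all the substantive analytic content has already been packaged into the preceding Lemma (through the unitarity of the flat connection on $V_\theta$ for $\theta \in \RR$, which is what kills the boundary term and allows the quadratic form $\langle \cvl Y, Y\rangle = \|C(Y)\|^2$ to be used), and the passage from a complex equivariant solution to real bounded conformal Killing fields is purely formal.
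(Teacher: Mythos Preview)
Your proposal is correct and follows essentially the same approach as the paper: the paper presents the Corollary as a direct summary of the preceding Lemma together with the observation (stated in the sentence immediately before the Corollary) that $C$ is a real operator, so one may decompose $Y$ into real and imaginary parts to obtain a real bounded conformal Killing field. You have simply made explicit the converse direction via $\cvl = C^* C$ and the boundedness argument via $\Gamma$-invariance of $|Y|_g^2$, both of which the paper leaves implicit.
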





\section{Global mapping properties}
 \label{s12III12.1}
We now describe results how the theory and results described in the previous two sections can be applied to construct
solutions of the vector constraint equation, in the CMC setting, on manifolds with a finite number of asymptotically cylindrical and
asymptotically periodic ends. We draw on the theory developed in two papers
\cite{RafeEdgeCPDE} and \cite{MPU1}, but do so partly for convenience since the results there are in precisely the form
that we need here, and in sufficiently general form.  There are results by many other authors which can be adapted to what
we need, though none are sufficiently general form.  We mention in particular \cite{LockhartMcOwenPisa,MelroseMendoza} in
the asymptotically cylindrical setting and \cite{TaubesPeriodic} for the asymptotically periodic case.

Let $(M,g)$ be a complete Riemannian manifold with ends $E_\ell$, $\ell = 1, \ldots, N$, where each $E_\ell$ (with the
induced metric) is asymptotic to either a cylindrical metric or a periodic metric. We refer to \cite{PTCMazzeoCylindrical}
for a precise description of the decay conditions, but briefly, we require that on each end $g$ differs from a metric which
is exactly cylindrical or periodic by a tensor $h$ which decays exponentially along with some number of its derivatives.
The papers \cite{RafeEdgeCPDE} and \cite{MPU1} assume that $h$ has a complete expansion along each end
in (negative) powers of the exponential of the distance function, but it is straightforward to adapt those arguments
to handle the case of metrics with finite regularity and a given rate of exponential decay toward the asymptotic limit.

We have already defined the weighted Sobolev spaces $e^{\delta \rho} H^s_g(M)$ for manifolds which
are exactly cylindrical or periodic, and where $\rho$ is a distance function. This generalizes immediately
to the manifold $(M,g)$. We let $\rho$ be a smooth function which equals $0$ on some large compact
set of $M$ and which agrees with the distance function on each cylindrical or periodic end.

We define the set of indicial roots of $\cvl$ on the end $E_j$ to equal the set of indicial roots of
the model (cylindrical or periodic) operator on that end. We then explain in Appendix C the fundamental
result that if $\delta$ is sufficiently close to $0$ and not equal to the indicial root of $\cvl$
on any end, then
\[
\cvl: e^{\delta \rho} H^{s+2}_g(M) \longrightarrow e^{\delta \rho}H^s_g(M)
\]
is Fredholm. We now make this more precise.

For each end $E_i$, define the space $\mycal Y(E_i)$ to consist of all conformal Killing fields on the asymptotic
exactly cylindrical or periodic model for $E_i$ which are globally bounded and, in the periodic case, spanned
by (the real parts of) vector fields which transform by $Y \mapsto e^{i\theta}Y$ over a period domain. Then choose
a smooth cutoff function $\chi_i$ which equals $1$ on $E_i$ and vanishes on the other ends and
set $\mathcal Y_i = \{ \chi_i Y: Y \in \mycal Y(E_i) \}$, and finally define $\mycal Y = \oplus \mycal Y_i$.

\begin{theorem} \label{thm:cylindre2}
Let $(M^n,g)$, $n \geq 3$, be a Riemannian manifold with $N$ ends, each of which is either asymptotically
cylindrical or asymptotically periodic. Suppose furthermore that there is no nontrivial globally defined conformal
Killing vector field $Y$ on $M$ which is in $L^2(M)$. Then there exists a number $\delta_*>0$ such that
if $0 < \delta < \delta_*$, then
\bel{surj}
\cvl : e^{\delta \rho} H^{k+2}_g(M;TM ) \to e^{\delta \rho} H_{g}^{k}(M;TM)
\ee
is surjective, while if $-\delta_* < -\delta < 0$, then
\bel{inj}
\cvl : e^{-\delta \rho} H_{g}^{k+2}(M ) \to e^{-\delta \rho} H_{g}^{k}(M)
\ee
is injective.  Moreover, if $-\delta_* < -\delta < 0$, then for every $k \geq 0$,
\begin{equation}
 \label{complem}
\cvl: e^{-\delta \rho} H_{g}^{k+2}(M;TM) \oplus {\mycal Y} \to e^{-\delta \rho}H^k_{g}(M;TM)
\end{equation}
is surjective, with finite dimensional nullspace.
\end{theorem}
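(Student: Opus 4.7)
The plan is to derive all three conclusions from the Fredholm theory already set up in Sections~4 and 5, combined with one direct integration-by-parts argument and an analysis of the cokernel that uses the hypothesis on $L^2$ conformal Killing fields. The first step is to produce $\delta_*$. Proposition~\ref{prop:main} and the corollary following Proposition~\ref{mainthmperiodic} yield a horizontal strip $|\epsilon| < \delta_*$ of weights containing no indicial root of $\cvl$ on any of the model ends other than the root at the origin. The standard gluing construction of \cite{RafeEdgeCPDE, MPU1} recorded in Appendix C then makes
$$\cvl : e^{\epsilon \rho} H^{k+2}_g(M; TM) \longrightarrow e^{\epsilon \rho} H^k_g(M; TM)$$
Fredholm for every $\epsilon \in (-\delta_*, \delta_*) \setminus \{0\}$.

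Next I would establish \eqref{inj} directly. If $Y \in e^{-\delta \rho} H^{k+2}_g$ satisfies $\cvl Y = 0$ with $\delta > 0$, the exponential decay places $Y \in L^2(M; TM)$, and the factorization $\cvl = C^* C$ from \eqref{factorvl}, combined with integration by parts justified by the decay, gives
$$0 = \langle \cvl Y, Y \rangle = \|C(Y)\|^2,$$
so $Y$ is a conformal Killing field lying in $L^2$; by hypothesis it vanishes. Proposition~\ref{duality} then converts this injectivity into the surjectivity \eqref{surj}.

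The bulk of the work lies in \eqref{complem}. By self-adjointness and Proposition~\ref{duality}, the cokernel of the Fredholm map in \eqref{inj} is canonically identified with the finite-dimensional kernel
$$\mathcal K := \ker \bigl(\cvl : e^{\delta \rho} H^{k+2}_g(M; TM) \to e^{\delta \rho} H^k_g(M; TM) \bigr).$$
Since only the indicial root $0$ lies in the strip between weights $-\delta$ and $\delta$, the asymptotic theory of \cite{RafeEdgeCPDE, MPU1} shows that each $\phi \in \mathcal K$ admits on each end $E_i$ an expansion whose leading term $Y_i(\phi)$ is a bounded conformal Killing field of the model metric, i.e.\ an element of $\mycal Y(E_i)$, with exponentially decaying remainder. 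The boundary-value map $\Pi : \mathcal K \to \mycal Y$, $\phi \mapsto \sum_i \chi_i Y_i(\phi)$, is injective, because if each $Y_i(\phi) = 0$ then $\phi \in e^{-\delta' \rho} H^{k+2}_g$ for some $\delta' > 0$ and the injectivity from step two forces $\phi = 0$.

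To close the argument I would introduce the pairing $(Y, \phi) \mapsto \langle \cvl Y, \phi \rangle$ between $\mycal Y$ and $\mathcal K$. Because each $Y = \sum \chi_i Y_i$ satisfies $\cvl_{\mathrm{model}} Y_i = 0$ on $E_i$, the exponential convergence of $g$ to its model implies $\cvl Y \in e^{-\alpha \rho} H^k_g$ for some $\alpha > 0$; shrinking $\delta_*$ below $\alpha$ makes this pairing well defined. Integration by parts on $\{\rho \le R\}$ together with $\cvl \phi = 0$, followed by passing to $R \to \infty$, reduces the pairing to a boundary-at-infinity Green-type form in the asymptotic data $(Y_i)$ and $\Pi(\phi)$. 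The injectivity of $\Pi$, together with the explicit form of the Green identity for $\cvl = C^* C$, shows this boundary pairing is nondegenerate, so the induced map $\mycal Y \to \mathcal K^*$ is surjective. Given $F \in e^{-\delta \rho} H^k_g$, we may then choose $Y \in \mycal Y$ so that $F - \cvl Y$ annihilates $\mathcal K$ and hence lies in the range of \eqref{inj}, producing $u \in e^{-\delta \rho} H^{k+2}_g$ with $\cvl u + \cvl Y = F$. Finite-dimensionality of the nullspace of \eqref{complem} is automatic, since the injectivity from step two makes the projection $(u, Y) \mapsto Y$ injective on the nullspace, embedding it into the finite-dimensional space $\mycal Y$. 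The principal technical obstacle is the nondegeneracy of the Green-type boundary pairing; it is straightforward in the asymptotically cylindrical case but requires careful bookkeeping of the $\Gamma$-holonomy in the periodic setting, where the ``limit at infinity'' is replaced by a sum over equivariant leading terms.
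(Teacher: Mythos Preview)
Your treatment of Fredholmness and of the injectivity/surjectivity duality for \eqref{inj} and \eqref{surj} follows the paper's argument closely.

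For \eqref{complem} the paper takes a much shorter path than you do. Given $W\in e^{-\delta\rho}H^k_g$, it simply solves $\cvl Y=W$ with $Y\in e^{\delta\rho}H^{k+2}_g$ using the surjectivity \eqref{surj} already in hand, and then invokes the asymptotic decomposition results \cite[Thm.~7.14]{RafeEdgeCPDE} and \cite[Lemma~4.18]{MPU1} directly on this $Y$ to split $Y=\mathring Y+Y'$ with $\mathring Y\in\mycal Y$ and $Y'\in e^{-\delta\rho}H^{k+2}_g$. No cokernel analysis or boundary pairing is used at all; the same regularity theorems you cite for elements of $\mathcal K$ are applied instead to the solution itself.

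Your cokernel/pairing route, by contrast, has an internal inconsistency. You assert that each $\phi\in\mathcal K$ has leading asymptotic term a \emph{bounded} conformal Killing field, i.e.\ $\Pi(\phi)\in\mycal Y$. But under precisely that assumption the Green boundary integrand $C(Y)(\nu,\phi)-C(\phi)(\nu,Y)$ tends to zero as $R\to\infty$, since $C$ applied to a bounded model conformal Killing field vanishes and the metric perturbation decays; hence $\langle\cvl Y,\phi\rangle=0$ for every $Y\in\mycal Y$ and every such $\phi$, and the map $\mycal Y\to\mathcal K^*$ is identically zero rather than surjective. What is actually going on is that the indicial root at $\lambda=0$ has a size-two Jordan block (for example $\cvl(x\,\partial_x)=0$ on the model product cylinder, and likewise $\cvl(xZ)=0$ for any Killing field $Z$ on $N$), so the full asymptotic expansion of $\phi\in\mathcal K$ contains linearly growing model solutions as well as bounded ones. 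It is these linearly growing terms, through the nonvanishing limit of $C(\phi)$, that produce a nonzero boundary pairing with $\mycal Y$. Your map $\Pi$ as defined does not see them, and the nondegeneracy step---which you yourself flag as the main obstacle---cannot go through as written. The paper's approach sidesteps this completely by never needing to compute the pairing.
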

\begin{remark}
This theorem states that if $M$ has no $L^2$ conformal Killing fields, then for any $W \in e^{-\delta \rho} H^{k}_{g}(M)$ there exists
a vector field $Y$ which decomposes as $Y = \mathring Y + Y'$ for some $\mathring Y \in {\mycal Y}$ and $Y' \in e^{-\delta \rho} H^{k+2}_{g}(M)$
and which satisfies $\cvl Y = W$.  Since $\mathring Y$ is conformal Killing, an immediate consequence is that this
solution $Y$ also satisfies
\begin{equation}
C(Y) \in e^{-\delta \rho} H^{k+1}_{g}(M; S^2_0(T^*M)).
\label{ckdecay}
\end{equation}

We also observe that if $E_i$ is an asymptotically cylindrical periodic end, then the vector field $\partial_x$ always
lies in $\mycal Y(E_i)$. On the other hand, a generic asymptotically periodic end has no asymptotic conformal
Killing vector fields. This implies that if all the ends of $(M,g)$ are asymptotically periodic, then for generic choices of $g$,
$\cvl: e^{\delta \rho} H^{k+2}_g(M) \to e^{\delta \rho}H^k_g(M)$  is an isomorphism for every $-\delta_* < \delta<\delta_*$.
\end{remark}

\begin{proof}
We first assume that the difference between $g$ and the its asymptotically cylindrical or periodic model along each end has an
asymptotic expansion in (not necessarily integer) powers of $e^{-\rho}$. This assumption is merely so that our hypotheses fit
within the framework of the results we quote, but is easy to remove, as we describe at the end of the proof.

There is a general principle that states in this setting that if the conformal vector Laplacian $\cvl$ associated
to the model cylindrical or periodic metric on each end $E_j$ is invertible on a given weighted space, with
inverse $G_j$, then we may patch together these inverses to get a parametrix $G$ for $\cvl$ on all of $M$.
This parametrix is an approximate inverse in the sense that $\cvl G = \mbox{Id} - K$ and $G \cvl = \mbox{Id} - K'$,
where $K$ and $K'$ are compact operators acting between the appropriate weighted Sobolev spaces.
Slightly more generally, it is sufficient to produce an operator $G_j$ on each end such that
$\cvl G_j = \mbox{Id} - K_j$, $G_j \cvl = \mbox{Id} - K_j'$ where $G_j$ and $K_j$, $K_j'$ are bounded
between the appropriate weighted spaces, and then the global parametrix $G$ can be obtained by
patching together these local parametrices. We recall the details of this argument in Appendix C below,
and refer there for a more precise formulation.

This principle shows that to prove the assertion in the statement of this theorem about the Fredholmness
of $\cvl$ we must exhibit the local parametrices for the model operators on each end.  For the
asymptotically cylindrical ends, this parametrix construction is the content of \cite[Theorem 4.4]{RafeEdgeCPDE},
while in the asymptotically periodic case it is\cite[Theorem 4.8]{MPU1}. (The reader should keep in mind that the
shift of the weight $\delta$ by $1/2$ in \cite[Equation~(4.3)]{RafeEdgeCPDE} is simply a result of a different
indexing of these weighted Sobolev spaces.)  These two results provide the existence of a suitable
parametrix which is bounded if and only if the weight parameter is not the imaginary part of an indicial root.
However, we have proved in Proposition~\ref{prop:main} (cylindrical case) and Proposition \ref{mainthmperiodic}
(periodic case) that there exists $\delta_* > 0$ such that there are no indicial roots with imaginary
part in the punctured interval $(-\delta_*,  \delta_*) \setminus \{0\}$. This proves the first assertion.

If we only know that $g$ decays to the model cylindrical or periodic metric at some exponential rate $e^{-\delta'\rho}$,
along with a finite number of its derivatives, then the conformal vector Laplacian for $g$ is a sum of
the conformal vector Laplacian for an exactly cylindrical or periodic metric and a perturbation term which
has all coefficients decaying like $e^{-\delta' z}$.  To be definite, consider the action of $\cvl$ on
$e^{\delta \rho} H^{k+2}_g$ for $0 < \delta < \delta_*$. First let $G_0$ be a parametrix for the conformal
vector Laplacian for the perturbed exactly cylindrical or periodic metric. This operator is bounded
between $e^{\delta \rho} H^k_{g}$ and $e^{\delta \rho} H^{k+2}_{g}$, and $\cvl \circ G_0 = \mbox{Id} - K_0$,
where $K_0: e^{\delta \rho} H^k_{g} \to e^{\delta - \delta'} H^{k+2}_{g}$ is bounded. However, since $e^{\delta - \delta'}H^{k+2}_{g}
\hookrightarrow e^{\delta \rho} H^k_{g}$ is a compact inclusion, $G_0$ is still a parametrix even for the conformal
vector Laplacian of the original metric $g$. The same argument works if $-\delta_* < -\delta < 0$.

We next observe that if $Y$ is an $L^2$ solution $\cvl Y = 0$, then \cite[Cor. 4.19]{RafeEdgeCPDE} (cylindrical
case) and \cite[Prop. 4.14]{MPU1} (periodic case) show that $Y \in e^{-\delta_* \rho} H^k_g$ for all $k \geq 0$.
Furthermore, the usual integration by parts, which is justified by the decay of $Y$, shows that
$C(Y) = 0$, so $Y$ must be an $L^2$ conformal Killing field. Hence under the hypothesis that $(M,g)$
admits no $L^2$ conformal Killing fields, then the mapping \eqref{inj} is not just Fredholm, but actually injective.
Moreover, by a straightforward duality argument, \eqref{surj} is surjective.

Now fix any $W \in e^{-\delta \rho} H^k_g(M;TM)$, for some $-\delta_* < -\delta < 0$. By the surjectivity
statement we have just proved, there exists $Y \in e^{\delta \rho} H^{k+2}_g(M;TM)$ such that $\cvl Y = W$.
We now cite \cite[Thm. 7.14]{RafeEdgeCPDE} (cylindrical case) and \cite[Lemma 4.18]{MPU1} (periodic case),
which assert that this solution decomposes as $Y = \mathring Y + Y'$, with $\mathring Y \in \mycal Y$
and $Y' \in e^{-\delta \rho} H^k_g$.  We remark that if $W = 0$ and $Y \in L^2$, then these same
decomposition results show that $Y \in e^{-\delta \rho} H^k_g$ for any $k \geq 0$, since the
term $\mathring Y$ does not lie in $L^2$; this was the result of the last paragraph.

This completes the proof of the theorem. \qed
\end{proof}


As an immediate consequence, we obtain a result which is one of the main goals of this paper:
\begin{prop} Let $(M,g)$ be a complete Riemannian manifold with all ends either asymptotically cylindrical
or asymptotically periodic. Use all notation as above, and assume that $(M,g)$ admits no nontrivial
global $L^2$ conformal Killing fields.   Let $A \in e^{-\delta \rho}H^k_g(M; S^2_0(T^*M))$ be a decaying
trace-free symmetric two-tensor, with $\delta$ sufficiently small. Then there exists a vector field
$Y = \mathring Y + Y' \in \mycal Y \oplus e^{-\delta \rho} H^{k+1}_g(M;TM)$ such that
$A + C(Y)$ is a solution of the vector constraint equation.
\end{prop}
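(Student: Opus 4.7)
The plan is to reduce the problem to a single application of Theorem~\ref{thm:cylindre2}. Recall from the formulas around \eqref{conf100}--\eqref{12III12.1} that for any symmetric traceless tensor $A$ and any vector field $Y$, the tensor $\tL = A + C(Y)$ is divergence-free with respect to $g$ if and only if
\[
\cvl Y \;=\; \tD_i A^{ij}
\]
in the sense of \eqref{conf103}. Since solutions of the vector constraint equation in the CMC setting with $L := A + C(Y)$ are precisely those for which $\tD_i L^{ij} = 0$ (see \eqref{conf1b}), the proposition reduces to solving $\cvl Y = \mathrm{div}_g A$ with the prescribed decay and regularity of $Y$.

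The next step is to locate $\mathrm{div}_g A$ in the appropriate weighted Sobolev space. Since $A \in e^{-\delta \rho} H^{k}_g(M; S^2_0 (T^*M))$ and $\mathrm{div}_g$ is a first order differential operator with uniformly bounded coefficients (the coefficients are controlled by the geometry of the asymptotically cylindrical or periodic ends), we have
\[
\mathrm{div}_g A \;\in\; e^{-\delta \rho} H^{k-1}_g(M; TM).
\]
Assuming $\delta$ is small enough that $0 < \delta < \delta_*$, where $\delta_*$ is the threshold provided by Theorem~\ref{thm:cylindre2}, we may invoke \eqref{complem} of that theorem (with the Sobolev index shifted down by one) to produce a decomposition
\[
Y \;=\; \mathring Y + Y' \;\in\; \mycal Y \,\oplus\, e^{-\delta \rho} H^{k+1}_g(M;TM)
\]
satisfying $\cvl Y = \mathrm{div}_g A$. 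This uses the standing hypothesis that $(M,g)$ carries no nontrivial global $L^2$ conformal Killing field, which is exactly the hypothesis under which Theorem~\ref{thm:cylindre2} delivers surjectivity.

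To finish, one checks that $L := A + C(Y)$ is the desired TT tensor. By construction $L$ is symmetric and trace-free, and the identity $\cvl Y = \mathrm{div}_g A$ together with the relation $(\cvl X)^j = -\tD_i C(X)^{ij}$ implies $\mathrm{div}_g L = 0$. The decay of each piece of $L$ is what one expects: $A$ lies in $e^{-\delta \rho} H^{k}_g$ by assumption, while $C(\mathring Y) \in e^{-\delta \rho} H^{k+1}_g$ by the remark following Theorem~\ref{thm:cylindre2} (the asymptotic conformal Killing piece is annihilated by $C$ in the limit, leaving only the decaying correction), and $C(Y') \in e^{-\delta \rho} H^{k}_g$ since $C$ is a first order operator.

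The only real subtlety lies in the loss of one derivative when passing from $A$ to $\mathrm{div}_g A$: the proposition requires $k \geq 1$ so that $\mathrm{div}_g A$ lies in a Sobolev space to which Theorem~\ref{thm:cylindre2} applies, and for $k = 0$ one has to either interpret the equation in a suitable distributional or dual sense and appeal to the duality statement in Proposition~\ref{duality}, or approximate $A$ by smoother tensors and pass to the limit using the uniform a priori estimate furnished by the Fredholm theory. This is the step where any genuine technicality resides; everything else is a bookkeeping exercise on weighted spaces.
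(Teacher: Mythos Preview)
Your proof is correct and follows essentially the same route as the paper: compute $\mathrm{div}_g A \in e^{-\delta\rho}H^{k-1}_g$, invoke the surjectivity statement \eqref{complem} of Theorem~\ref{thm:cylindre2} to solve $\cvl Y = \mathrm{div}_g A$ with $Y = \mathring Y + Y' \in \mycal Y \oplus e^{-\delta\rho}H^{k+1}_g$, and observe that $A + C(Y)$ is then divergence-free. The paper's own proof is a terse two-sentence version of exactly this; your additional remarks on the decay of $C(Y)$ and the $k=0$ issue are reasonable elaborations that the paper simply omits.
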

\begin{proof} Following the procedure outlined in the introduction, and using the mapping properties
above, we see that if $A \in e^{-\delta \rho} H^k_{g}$, then $\delta A \in e^{-\delta \rho} H^{k-1}_g$, and hence
there exists $Y = \mathring Y + Y' \in \mycal Y \oplus e^{-\delta \rho} H^{k+1}_g$ such that $\cvl Y = \delta A$.
The actual solution of the vector constraint equation is equal to $A + C(Y)$.   \qed
\end{proof}

\medskip

We conclude this section with one further application of Theorem~\ref{thm:cylindre2}. This is the
analogue of the York splitting theorem in this geometric setting.
\begin{prop}
Let $(M^n,g)$ be a complete manifold with a finite number of ends, each of asymptotically cylindrical or
asymptotically periodic type. Suppose that there exist no $L^2$ conformal Killing fields on $M$.
Then for any $0 < \delta < \delta_*$, $k \geq 0$, and each choice of sign $\pm$, there is a direct sum decomposition
\[
\begin{split}
e^{\pm \delta \rho} H^k_g(M;S^2_0(T^*M)) = \{ & C(Y): Y \in e^{\pm \delta \rho} H^{k+1}_g(M;TM)\} \\
& \oplus \{ h \in e^{\pm \delta \rho} H^k_{g}(M;S^2_0(T^*M)): \delta h = 0 \}
\;.
\end{split}
\]
\label{York}
\end{prop}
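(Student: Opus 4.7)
My plan is to reduce the decomposition to the PDE $\cvl Y = \delta A$: given $Y$ satisfying it, the tensor $h := A - C(Y)$ is automatically trace-free (since both $A$ and $C(Y)$ are) and satisfies $\delta h = \delta A - \cvl Y = 0$; conversely, any splitting $A = C(Y) + h$ with $h$ TT forces $\cvl Y = \delta A$. So one needs to solve this PDE with $Y \in e^{\pm\delta\rho} H^{k+1}_g$.

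In the $+$ case this is immediate: the surjectivity statement~\eqref{surj} of Theorem~\ref{thm:cylindre2} produces $Y \in e^{+\delta\rho} H^{k+1}_g$ with $\cvl Y = \delta A$, and then $h := A - C(Y) \in e^{+\delta\rho} H^k_g$ is the required TT tensor.

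In the $-$ case, \eqref{inj} says $\cvl: e^{-\delta\rho} H^{k+1}_g \to e^{-\delta\rho} H^{k-1}_g$ is Fredholm and injective, and by Proposition~\ref{duality} its cokernel is isomorphic to $\ker\cvl|_{e^{+\delta\rho} H^{k+1}_g}$. The key observation is that $\delta A$ is automatically orthogonal to this cokernel for every $A \in e^{-\delta\rho} H^k_g$. Indeed, Propositions~\ref{prop:main} and~\ref{mainthmperiodic} imply that the only indicial root of $\cvl$ with $|\Im \lambda| < \delta_*$ is $\lambda = 0$, so each $W \in \ker\cvl|_{e^{+\delta\rho} H^{k+1}_g}$ decomposes as $W = \mathring W + W'$ with $\mathring W \in \mycal Y$ and $W' \in e^{-\delta\rho} H^{k+1}_g$; in particular $|W|$ is bounded and, by \eqref{ckdecay}, $|C(W)|$ decays like $e^{-\delta\rho}$. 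Integration by parts on $\langle \cvl W, W\rangle$ using $\cvl = C^*C$ from \eqref{factorvl} gives boundary terms bounded by $|C(W)||W|\cdot|\partial B_R| = O(e^{-\delta R}) \to 0$, so $\|C(W)\|_{L^2}^2 = 0$ and hence $C(W) = 0$. A second integration by parts, exploiting the symmetry and trace-freeness of $A$, yields
\[
\langle \delta A, W\rangle = \tfrac12 \langle A, C(W)\rangle - \lim_{R \to \infty}\oint_{\partial B_R} A^{ij} W_j \nu_i\, dS = 0,
\]
since $C(W) = 0$ and $|A||W|\cdot|\partial B_R| = O(e^{-\delta R})$. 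Thus $\delta A$ lies in the image, and the required $Y \in e^{-\delta\rho} H^{k+1}_g$ exists.

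Directness of the sum in either case follows by the same boundary-term argument: if $C(Y) + h = 0$ with $Y \in e^{\pm\delta\rho} H^{k+1}_g$ and $h \in e^{\pm\delta\rho} H^k_g$ TT, then $\cvl Y = -\delta h = 0$, and $\|C(Y)\|^2_{L^2} = 0$ follows either trivially (in the $-$ case, since $Y \in L^2$) or via the bounded-plus-decaying asymptotic structure of $Y$ together with \eqref{ckdecay} (in the $+$ case), forcing $C(Y) = 0$ and hence $h = 0$. The main technical obstacle throughout is the careful justification of these boundary-term estimates in the $+$ case, which rests precisely on the absence of indicial roots in the punctured strip $0 < |\Im \lambda| < \delta_*$ provided by Propositions~\ref{prop:main} and~\ref{mainthmperiodic}.
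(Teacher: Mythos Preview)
Your argument is correct and, in the $+$ case, essentially identical to the paper's: solve $\cvl Y = \delta A$ using the surjectivity \eqref{surj}, set $h = A - C(Y)$, and handle directness via the nullspace structure.

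In the $-$ case you take a genuinely different route. The paper invokes the surjectivity of the \emph{extended} map \eqref{complem}, producing a solution $Y = \mathring Y + Y'$ with $\mathring Y \in \mycal Y$ merely bounded and $Y' \in e^{-\delta\rho}H^{k+1}_g$, and then observes that $C(\mathring Y)$ decays so that $C(Y) \in e^{-\delta\rho}H^k_g$. You instead stay on the unextended map $\cvl: e^{-\delta\rho}H^{k+1}_g \to e^{-\delta\rho}H^{k-1}_g$, identify its cokernel with $\ker\cvl|_{e^{+\delta\rho}}$ via Proposition~\ref{duality}, and show $\delta A$ annihilates every such $W$ by first proving $C(W)=0$ (same boundary-term argument the paper uses for uniqueness in the $+$ case) and then integrating by parts once more against $A$. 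Both approaches rest on the same structural fact---that kernel elements at weight $+\delta$ are conformal Killing---but package it differently. Your route has the advantage of producing $Y \in e^{-\delta\rho}H^{k+1}_g$ directly, which is exactly what the statement demands; the paper's proof yields $Y$ only modulo $\mycal Y$ and then asserts, somewhat elliptically, that one may take $Y$ in the decaying space. The paper's route, on the other hand, avoids the explicit cokernel-orthogonality computation by absorbing it into the already-established Theorem~\ref{thm:cylindre2}.
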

\begin{proof}
This result is simply a manifestation of the fact that $\cvl$ has closed range when acting on $e^{\pm \delta}H^{k+2}_g(M;TM)$,
as well as the factorization $\cvl = C^* \circ C = \delta \circ C$. However, we must argue slightly differently
when proving this decomposition on spaces with positive weights than with negative ones.

First consider the mapping \eqref{surj}, with weight $\delta \in (0, \delta_*)$. For any $h \in e^{\delta \rho} H^k_g(M; S^2_0(T^*M))$,
we wish to write
\[
h = C(Y) + \kappa,
\]
where $Y \in e^{\delta \rho} H^{k+1}_g(M; TM)$ and $\delta^g \kappa = \tr^g \kappa = 0$.  To find the appropriate
$Y$, take divergence of both sides of this equation to get $\cvl Y = \delta h$. Using that \eqref{surj} is surjective,
we can find a solution $Y \in e^{\delta \rho} H^{k+1}_g$, and if we then set $\kappa := h - C(Y)$, then
$\kappa$ is trace-free and has vanishing divergence. Thus $h = \kappa + C(Y)$ is the required decomposition.

Note that the vector field $Y$ is not uniquely determined since the mapping \eqref{surj} has nontrivial nullspace.
However, using facts described in the proof of Theorem~\ref{thm:cylindre2}, any element $Z \in e^{\delta \rho} H^k_g$
which satisfies $\cvl Z = 0$ satisfies $Z = \mathring Z + Z'$ where $\mathring Z \in \mycal Y$ and
$Z' \in e^{-\delta \rho}H^k_g$.  We have proved in \SS 4 and 5 that $C(\mathring Z)$ necessarily vanishes. This means
that the usual integration by parts
\[
0 = \langle \cvl Z, Z \rangle = ||C(Z)||^2
\]
is still valid, so that $C(Z)$ itself must vanish. This means that the term $C(Y)$ which appears in the decomposition
for $h$ above is actually well-defined, even though $Y$ itself is only determined up to an element $Z$ of the nullspace of $\cvl$.

Now consider the decomposition on the negatively weighted spaces. For this we use the injectivity of the map \eqref{inj}.
Indeed, if $h \in e^{-\delta \rho} H^k_g(M; S^2_0(T^*M))$, then we can find $Y \in e^{\delta \rho} H^{k+1}_g(M;TM)$ which
solves $\cvl Y = \delta h$ and $Y = \mathring Y + Y'$, where $\mathring Y \in \mycal Y$ and $Y' \in e^{-\delta \rho}H^{k+1}_g$.
Observing that $C(\mathring Y) = 0$ on the exactly cylindrical or periodic model for each end, we see that
$C(Y) \in e^{-\delta \rho} H^k_g$.  Thus we have shown that $h = C(Y) + \kappa$,  where $Y \in e^{-\delta \rho} H^{k+1}_g$,
and where $\kappa \in e^{-\delta \rho} H^k_g$ is trace-free and divergence-free. \qed
\end{proof}

\appendix

\section{Conformal Killing vectors on cylinders}
 \label{s17II11.2}
We consider the conformal Killing vector equation $C(X)=0$ for
a metric
\bel{17II11.10}
  g=dz^2+ h
  \;,
\ee
where $h$ is a Riemannian metric on an $(n-1)$-dimensional
compact manifold $\spOmega $. For the convenience of the reader
we repeat here Equations~\eq{Ceqspm}-\eq{Ceqspm3}:
\beal{Ceqspmx}
 C_{zz} &=& \frac 2 n \left(   ({n-1}) \partial_z f - \divh Y
 \right)
 \;,
 \\
 \label{Ceqspm2x}
 C_{zA} &=& \partial_A f +\partial_z Y_A
 \;,
 \\
 \label{Ceqspm3x}
 C_{AB} &=& \mcD_A Y_B + \mcD_B Y_A - \frac 2 n ( \partial_z f + \divh Y )h_{AB}
 \;.
\eea
%
It immediately follows from $C_{AB}=0$ that $Y$ is a conformal Killing vector
of $(N,h)$:
\beal{Ceqspm3x3}  \mcD_A Y_B + \mcD_B Y_A - \frac 2 {n-1} \divh
Y  h_{AB} =0
 \;.
\eea
Differentiating \eq{Ceqspm3x3} with respect to $z$ and using $C_{zz}=0$ we find
that $\psi:=\partial_z f$ satisfies
$$
  \mcD_A   \mcD_ B \psi = \Delta_{zg} \psi h_{AB}
  \;.
$$
Keeping in mind that we have assumed $N$ to be compact, we
conclude from \cite[Theorem~21]{Kuehnel} that either
\bel{6III11.1}
 \partial_A \psi\equiv \partial_A \partial_z f =0
  \;,
\ee
or $(N,h)$ is a sphere with the round metric.

Suppose, first, that $(N,h)$ is a round sphere, the metric $g$
is then conformal to the flat metric on $ \R^n\setminus \{0\}$.
Now, it is not too difficult to check that any conformal Killing vector on $ \R^n\setminus \{0\}$ extends to a conformal Killing vector on   $ \R^n $.
As is well known, in coordinates $x^i$ on $\R^n$ in which the
Euclidean metric $\delta$ is represented by the identity
matrix, such vector fields take the form
$$
 Y^i = 
\frac 12 A^i{}_{jk  }  x^j x^k   +
  A^i{}_{j  }x^j   +
 A^i
 \;,
$$
for a set of constants $A^i$, $A^i{}_{j   }$, $A^i{}_{jk  }$,
with $A^i{}_{jk  }$ symmetric in the lower indices, and
$$
 A_{(ij)k } - \frac 2 n  A^m{}_{mk } \delta_{ij} = 0 =
 A_{(ij)} - \frac 2 n  A^m{}_{m} \delta_{ij}\;.
$$
Setting  $z=\ln r$ we have
$$
 \delta = dr^2 +r^2 d\Omega^2 = r^{2} (dz^2 + d\Omega^2) = e^{2z} g
 \;,
$$
with $g$ as in \eq{17II11.10}. So if  $\sum_{i,j,k}|A^i{}_{jk
}| \not = 0$ the $g$--length of $Y$ behaves as
$$
 |Y|_g= \sqrt{ g(Y,Y)} = e^{-z} \sqrt{ \delta(Y,Y)} \sim e^{z}
 \;;
$$
Otherwise, assuming that $\sum_{i,j}|A^i{}_{j  }| \not = 0$ the
$g$--length of $Y$ behaves as
$$
 |Y|_g  \sim 1
 \;.
$$
Finally, if  $\sum_{i,j,k}|A^i{}_{jk  }| =
0=\sum_{i,j}|A^i{}_{j  }| $ but $\sum_{i }|A^i | \ne 0$ we
obtain
$$
 |Y|_g \sim O(e^{-z})
 \;.
$$
In particular, when $h$ is the round metric on a sphere  there
exist conformal Killing vectors that decay to zero
exponentially fast along the cylindrical ends. Note that such
vectors are \emph{essential  conformal Killing vectors for
$g$}, i.e. they are \emph{not} Killing vectors for $g$ (though
they are, of course, for $\delta$).

It remains to consider $(N,h)$ such that $h$ is \emph{not} a round metric on a sphere.
Then  \eq{6III11.1} holds, in particular $\partial_z f$ is constant on each level set of $z$.
Since the integral over $N$ of $\divh Y$ is zero, we conclude
from \eq{Ceqspmx} that
$$
 \partial_ z f = 0 \;.
$$
Hence $\divh Y =0$ as well, and  $Y$ is a Killing vector of $h$
for all $z$. {}From \eqref{Ceqspm2x} we obtain $\partial_z^2 Y=0,$ hence $Y=V
+ z W$ for some $z$-independent $h$-Killing vector fields $V$
and $W$. Using \eqref{Ceqspm2x} again we find that $W=\mcD
\psi$ for a function $\psi$. But then  $ \Delta_{zg} \psi=\divh W
=0$, so $\psi$ is a constant and we infer that $W$ vanishes.
Hence $Y$ is $z$-independent. From \eqref{Ceqspm2x} we deduce
that $f$ is a constant. We conclude that $X=f\partial_z +Y$,
where $f$ is a constant and $Y$ is a conformal vector field on
$(N,h)$.

As a byproduct of the analysis above, we obtain:

\begin{Proposition}
 \label{P5II12.1}
\begin{enumerate}
 \item {Cylindrical metrics
have no Killing vector fields that decay along the cylindrical
ends.}
 \item {Cylindrical metrics
have no conformal Killing vector fields that decay along the cylindrical
ends unless $(N,h)$ is a round sphere.}
\end{enumerate}
\end{Proposition}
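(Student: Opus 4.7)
The plan is to read off both statements as immediate consequences of the analysis already carried out in this appendix. For part (2), when $(N,h)$ is not a round metric on a sphere, the argument given above has shown that every conformal Killing vector of $g=dz^2+h$ must take the form $X=c\partial_z+Y(\omega)$, with $c$ a real constant and $Y$ a $z$--independent conformal Killing vector on $(N,h)$. Since $Y$ lives on the compact manifold $N$, $|X|_g$ is asymptotic as $|z|\to\infty$ to a function that is independent of $z$ and can vanish identically only if both $c=0$ and $Y\equiv 0$, so no nontrivial decay is possible. The complementary case $(N,h)$ a round sphere has already been treated explicitly: the translations of $\R^n$ pulled back via $z=\ln r$ furnish conformal Killing fields with $|X|_g\sim O(e^{-z})$, confirming that decay genuinely occurs in the spherical case, so the sphere hypothesis cannot be dropped.

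For part (1), I would exploit that the Killing condition $\calL_X g=0$ is strictly stronger than $C(X)=0$ and, for $X=f\partial_z+Y$, reduces to the simultaneous vanishing of
\[
\partial_z f,\qquad \partial_A f+\partial_z Y_A,\qquad \mcD_A Y_B+\mcD_B Y_A.
\]
The first forces $f=f(\omega)$; the third says that for each fixed $z$ the field $Y(z,\cdot)$ is Killing on $(N,h)$; and integrating the middle identity in $z$ gives $Y_A(z,\omega)=-z\,\mcD_A f(\omega)+Y^{(0)}_A(\omega)$. Linearity in $z$ then forces both $Y^{(0)}$ and the gradient $\mcD f$ to be Killing on $N$. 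The standard fact that a gradient Killing field on a compact Riemannian manifold vanishes identically (since $\mathrm{Hess}(f)=0$ forces $f$ to be constant by compactness) gives $\mcD f\equiv 0$, so $f$ is a global constant and $X=c\partial_z+Y^{(0)}(\omega)$ with $Y^{(0)}$ Killing on $(N,h)$. Such an $X$ is $z$--independent and therefore cannot decay along the cylindrical ends unless $X\equiv 0$. Note that this argument requires no hypothesis on $(N,h)$: the stronger Killing condition already rules out the essential conformal Killing fields produced in the spherical case of part (2).

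The only real work is invoking the gradient Killing lemma, together with careful bookkeeping of the distinction between the strict Killing equation and the weaker conformal Killing equation $C(X)=0$ used throughout the rest of the appendix; everything else is immediate from formulas already in hand.
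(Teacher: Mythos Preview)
Your proposal is correct. For part~(2) you are doing exactly what the paper does: reading the statement off from the classification of conformal Killing fields already established in the appendix.

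For part~(1), however, you take a genuinely different route. The paper treats the Proposition as a byproduct of the conformal Killing analysis: in the non-sphere case every conformal Killing field was shown to be $z$-independent, so a fortiori every Killing field is; in the sphere case the conformal Killing fields were completely classified (as those of $\R^n$), and the only decaying ones --- the translations --- were explicitly observed to be \emph{essential} conformal Killing fields for $g$, hence not Killing. Thus the paper's argument for part~(1) passes through the sphere/non-sphere dichotomy and implicitly uses K\"uhnel's theorem (via the equation $\mcD_A\mcD_B\psi=\Delta_{\zg}\psi\,h_{AB}$). Your argument instead attacks the strict Killing equation directly, obtains $Y_A=-z\,\mcD_A f+Y^{(0)}_A$, and kills the linear-in-$z$ term with the elementary gradient-Killing lemma $\mathrm{Hess}(f)=0\Rightarrow f=\mathrm{const}$ on a compact manifold. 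This is uniform in $(N,h)$, avoids any appeal to the classification result, and is strictly more elementary. The paper's route, by contrast, has the virtue of extracting both parts from a single analysis and of making transparent why the sphere is the unique exception in part~(2).
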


\section{Indicial roots of the sphere $\mathbb{S}^{n-1}$}
\label{subsec:indicialsphere} In this paragraph we give a
complete description of the set of indicial roots in the case
where the manifold $\Omega$ is the sphere $ \mathbb{S}^{n-1}$
with its standard unit round metric $h$.  The Ricci tensor of
$(\mathbb{S}^{n-1},h)$ is given by  $\Ric = (n-2)h$ and this
provides an important simplification in the study of the equations
for the indicial exponents,  $A_{\lambda}(f,Y) =0$:
\begin{eqnarray}
-2 \left( 1-\frac{1}{n} \right)  \lambda^{2}f - \Delta_h f +\mimath  \lambda
\left(1- \frac{2}{n} \right) \mdiv_h Y &=&0 \;, \label{eq:trans1}\\
-\widetilde{\cvl } Y + \mimath   \lambda  \left(1-\frac{2}{n}  \right)  \mcD  f
-\lambda^{2} Y &=&0\;. \label{eq:trans2}
\end{eqnarray}
This is due to the fact that on a round sphere we have
simple commutation properties for the operators $\mdiv$,
$\grad$ and $\widetilde{\Delta_{\mathbb{L}}}$ (from now we
suppress the subscript $h$ for these differential operators as
they will always refer to $h$). We recall that the eigenvalues of $\Delta$
when considered as acting on functions on $\mathbb{S}^{n-1}$
are $\lambda_j = j(n-2+j)$ where $j$ runs from $0$ to
$\infty$. For each function $f$ we can write $f
=\sum_{j=0}^{\infty} f_j$ where $f_j$ satisfies $\Delta f_j =
\lambda_j f_j$. The anounced commutations properties are the
following ones: for all smooth vector field $Y$ on
$\mathbb{S}^{n-1}$ we  have, keeping in mind that we use the convention $\Delta=-\nabla_b \nabla^b$,
\begin{eqnarray}
\nonumber
-\Delta \left( \mdiv Y \right)  &= & \nabla_{b} \nabla^{b} \nabla_{a} Y^{a}  =
\nabla_{b} \left( \nabla_{a}  \nabla^{b}  Y^{a}  + R^{a}_{\;cja}h^{jb}Y^{c}
\right) \\
\nonumber
&=& \nabla_{a}  \nabla_{b} \left(  \nabla^{b}Y^{a} + R^{b}_{\; cba}
  \nabla^{c}Y^{a}   + R^{a}_{\; cba}\nabla^{b}Y^{c} \right)
\\
\nonumber
 &&  - \nabla_{b}
\Ric_{cj}h^{jb}Y^{c} \\
 \nonumber
&= & \nabla_{a}  \nabla_{b}  \nabla^{b}Y^{a} -(  n-2) \nabla_{b}Y^{b}
 \\
&= &-\mdiv \left(\Delta Y \right) - (n-2) \mdiv Y.
 \label{28VI.1}
\end{eqnarray}
A similar computation gives:
 \begin{equation}
 \Delta \left(\grad f\right) = \nabla \left(\Delta f \right) -(n-2)  \grad f.
 \label{28VI.2}
\end{equation}
It then follows that:
\begin{equation*}
\mdiv \left(   \widetilde{\Delta_{\mathbb{L}}} Y  \right) = 2(1-\frac{1}{n}) \Delta(\mdiv Y) -
2(n-2) \mdiv Y
 \;,
\end{equation*}
and
\begin{equation*}
\widetilde{\Delta_{\mathbb{L}}} \left(\grad f \right) = 2(1-\frac{1}{n})\nabla(\Delta f)-2(n-2) \grad f.
\end{equation*}

Let $\mu_j$ denote the eigenvalues of
$\Delta$  acting on vector fields on
$\mathbb{S}^{n-1}$. Integration by parts shows  that $\mu_j>0$, and from \eq{28VI.1}-\eq{28VI.2} we deduce that
$$
 -\mu_j =-\lambda_j+ n-2 =  (1-j)(n-2)-j^2 \in \{\ldots,-n-6, -1\}
    \;,
    \quad j\in \N^*
 \;.
$$

Let us now turn attention to the determination of the indicial roots.
Given $\lambda \neq 0$, assume that $(f,Y)$ is a non trivial solution of
$A_{\lambda}(f,Y)=0$.
Taking the divergence of equation (\ref{eq:trans2}) and combining with (\ref{eq:trans1}) we get:
\begin{equation}
- \Delta^{2}f+\left( 2\lambda^{2}- \frac{n^{2}-2n}{n-1}\right)  \Delta f
+\left( 2(n-2) \lambda^{2} - \lambda^{4}\right) f =0 .
\label{eq:subs}
\end{equation}
As $ \langle f, 1 \rangle =0$, as one can see integrating
equation (\ref{eq:trans1}), we have the decomposition $f
=\sum_{j=1}^{\infty} f_j$ where the sum starts with $j=1$.
Inserting this expression in (\ref{eq:subs}) we get that for
each $f_i$ which is not identically equal to zero we must have:
\begin{equation}
-\lambda^4 - 2\mu_j\lambda^2 - \mu_j^2 + \frac{(n-2)^2}{n-1}(-\mu_j + 1) = 0.
\label{eq:indicialroots}
\end{equation}
Conversely if $\lambda$ satisfies (\ref{eq:indicialroots}) for at least one
$j$, one gets a non trivial solution of $A_{\lambda}(f,Y)=0$ choosing
$f$ as an eigenfunction of $\Delta$ associated to $\lambda_j$ and $Y$ as $\kappa \grad f$ where
\begin{equation*}
 \kappa= \frac{-2(1-\frac{1}{n})\lambda^2 + \lambda_j}{\imath \lambda_{j} \lambda (1- \frac{2}{n})}.
\end{equation*}

We now solve (\ref{eq:indicialroots}). Since $\mu_j\ge -1$ the discriminant $
{\frac{(n-2)^2}{n-1}(-\mu_j +1)}$ is nonpositive, and
 we get:
$$\lambda^2_{+} = -\mu_j + \imath \sqrt {-\frac{(n-2)^2}{n-1}(-\mu_j +1)}.$$
and
$$
 \lambda^2_{-} = -\mu_j - \imath \sqrt {-\frac{(n-2)^2}{n-1}(-\mu_j +1)}.
$$
Using the fact that the solutions of the equation $z^2= a \pm
\imath b$ with $b \geq 0$ are $z_1= \sqrt{\frac{a}{2}+
\frac{\sqrt{a^2 + b^2}}{2}} \pm \imath \sqrt{-\frac{a}{2}+
\frac{\sqrt{a^2 + b^2}}{2}}$ and $z_2 = - z_1$ we conclude that
the non-zero indicial roots on the sphere $\mathbb{S}^{n-1}$
are the complex numbers:
\begin{eqnarray*}
 \lefteqn{
 \sqrt{-\frac{1}{2}\mu_j + \frac{1}{2}\sqrt{\mu_j^2 +
     \frac{(n-2)^2}{n-1}(\mu_j -1)} }}
 &&
\\
 &&
        \pm \imath \sqrt{\frac{1}{2}\mu_j +
   \frac{1}{2}\sqrt{\mu_j^2 + \frac{(n-2)^2}{n-1}(\mu_j -1)} }
\end{eqnarray*}
and their opposites, where $j$ runs from $1$ to $\infty$. Note
that the first indicial root is $\sqrt{-1}$ (associated with
$\mu_1 = 1$) and that there are no non-zero indicial roots such
that $\Im( \lambda) < 1$.

\section{Fredholm properties of elliptic operators for manifolds with many ends}
\label{Ap3I1.1}
\newcommand{\tu}{{\tilde u}}
\newcommand{\tphi}{{\tilde \phi}}
\newcommand{\dtphi}{\underline{{\tilde \phi}}}
\newcommand{\utphi}{\overline{{\tilde \phi}}}
\newcommand{\tpsi}{{\tilde \psi}}
\newcommand{\bphi}{{\bar \phi}}
\newcommand{\hphi}{{\hat \phi}}

\renewcommand{\calL}{\mathcal L}
\renewcommand{\del }{\partial}
In this appendix we formulate and prove an abstract result about Fredholm theory of elliptic
operators on complete manifolds with more than one end.  We begin with the following
abstract result.
\begin{lemma}
 \label{L8XII0.1}
Let $A: X \longrightarrow Y$ be a bounded linear operator between two Banach spaces. Then $A$ is
Fredholm, i.e.\ has closed range, and finite dimensional kernel and cokernel, if and only if there exists
a bounded operator $B: Y \longrightarrow X$ which satisfies
\[
B \circ A = \mbox{Id} - Q_1, \qquad A \circ B = \mbox{Id} - Q_2,
\]
where $Q_1: X\to X$ and $Q_2: Y\to Y$ are compact operators. If this is the case, then we can modify $B$
so that the remainder terms $Q_1$ and $Q_2$ are finite rank projectors onto the nullspace and
a complement to the range of $A$, respectively.
\end{lemma}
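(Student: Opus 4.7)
The plan is to prove the two implications separately. The easier direction---existence of a parametrix with compact remainders implies Fredholmness---reduces directly to the Riesz--Schauder theory of operators of the form $\mathrm{Id} - K$ with $K$ compact. From $B \circ A = \mathrm{Id} - Q_1$ one reads off the inclusion $\ker A \subseteq \ker(\mathrm{Id} - Q_1)$, whose right-hand side is finite-dimensional by Riesz theory. From $A \circ B = \mathrm{Id} - Q_2$ one reads off the inclusion $\operatorname{ran}(A) \supseteq \operatorname{ran}(\mathrm{Id} - Q_2)$, whose right-hand side is closed and of finite codimension in $Y$. The remaining point---that $\operatorname{ran}(A)$ itself is closed---reduces to the observation that a subspace of a Banach space containing a closed finite-codimensional subspace is automatically closed, since it differs from that subspace by at most a finite-dimensional summand.

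For the converse, I would use the structural decomposition of a Fredholm operator. Since $\ker A$ is finite-dimensional, it admits a closed topological complement $X_0$ in $X$; since $\operatorname{ran}(A)$ is closed of finite codimension, one may choose a (finite-dimensional, hence closed) complement $Y_0$ so that $Y = \operatorname{ran}(A) \oplus Y_0$. The restriction $A \colon X_0 \to \operatorname{ran}(A)$ is a continuous bijection between Banach spaces, so the open mapping theorem provides a bounded inverse $T$. Letting $\Pi$ denote the projection onto $\operatorname{ran}(A)$ along $Y_0$, one sets $B := T \circ \Pi$ and computes
\[
A \circ B = \mathrm{Id}_Y - P_{Y_0}, \qquad B \circ A = \mathrm{Id}_X - P_{\ker A},
\]
where $P_{Y_0}$ and $P_{\ker A}$ are finite-rank projectors. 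This single construction proves both the parametrix existence and the refined final sentence of the lemma: the remainders $Q_1, Q_2$ are precisely the finite-rank projectors onto $\ker A$ and onto the chosen complement of $\operatorname{ran}(A)$, and in particular are compact.

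The only step requiring any real care is the closed-range argument in the first implication; the rest is an immediate consequence of Riesz--Schauder theory and the open mapping theorem. A minor subtlety is that the complements $X_0$ and $Y_0$ are not canonical, but existence is guaranteed and the stated conclusion is independent of the choice. I do not expect any genuinely new analytic difficulty here, since the result is the abstract functional-analytic backbone on top of which the earlier geometric parametrix constructions (for the conformal vector Laplacian on ends of cylindrical and periodic type) are assembled.
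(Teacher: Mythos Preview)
Your argument is correct and is the standard Atkinson characterization of Fredholm operators. Note, however, that the paper does not actually prove this lemma: it is stated as a well-known abstract result and then immediately applied, so there is no ``paper's own proof'' to compare against. Your write-up would serve perfectly well as the omitted proof.
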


We apply this in the concrete setting of elliptic operators on complete manifolds with a finite number of ends,
and where $X$ and $Y$ are weighted Sobolev or H\"older spaces. The main application we have in mind,
of course, is to the conformal vector Laplacian  $\cvl$ on a complete Riemannian manifold $(M,g)$ which has some
combination of asymptotically cylindrical, periodic, Euclidean, conic, or hyperbolic ends. The results here
apply equally well to the linearized Lichnerowicz operator, as studied in \cite{PTCMazzeoCylindrical},
on the same class of manifolds.  The guiding principle is that while Fredholmness of an operator is a
global property, it is sufficient to check it `locally', i.e.\ on each end.  This `local Fredholmness', in turn,
is equivalent to the existence of parametrices with compact error terms on each end. (We assume we are
working with an elliptic operator, so a parametrix with compact error is available on any bounded
region of the manifold.)

To set up the notation, suppose that $(M,g)$ is a complete Riemannian manifold with ends $E_1, \ldots, E_N$.
We assume that each $E_i$ is an open set, that $\overline{E_i}$ has compact smooth boundary, and that there is
a `radial function' $r$, which is a smooth, strictly positive function which satisfies
\[
C_1 r \leq 1 + \mbox{dist}\,(\cdot, \del E_i) \leq C_2 r.
\]
To be very specific, we assume that each end $E$ has one of the following types of geometries:
\begin{itemize}
\item[a)] Asymptotically cylindrical; thus $E = [0,\infty) \times N$, where $N$ is compact, and
$g \sim dr^2 + \zg$ for some Riemannian metric $\zg$ on $N$.
\item[b)] Asymptotically periodic; here $E$ is one half of an infinite periodic cylinder, as described in \S 5,
which covers a compact manifold $\mathring{X}$, and $g$ is asymptotic to the lift of a smooth metric from
$\mathring{X}$. The function $r$ is commensurable with the number of  fundamental domains (or period lengths)
between a given point and $\del E$. As a special case, $E = [0,\infty) \times N$, $\mathring{X} = S^1 \times N$,
and $g$ is asymptotic to a metric on $E$ which has period $T$.
\item[c)] Asymptotically hyperbolic. As before, $E = (1,\infty) \times N$, with $(N,\zg)$ compact
Riemannian, and $g \sim dr^2 + e^{2r}\zg$.
\item[d)] Asymptotically conic; we assume that $E = [1,\infty) \times N$, $(N,\zg)$ compact Riemannian,
and $g \sim dr^2 + r^2 \zg$. The most important special case is when $(N,\zg)$ is the sphere $S^{n-1}$ with
its standard unit metric, in which case we say that $E$ is an asymptotically Euclidean end; however, the results
we describe here apply equally easily in this slightly more general setting.
\end{itemize}
In each of these cases, we say that $g$ is asymptotic to a model metric $\hat{g}$ provided $g - \hat{g} = k$
decays as $r \to \infty$ along with a certain number of its derivatives. We make precise in each case the precise
rate of decay needed.

We also let $E_0$ denote an open interior region of $M$, so that $\overline{E_0}$ is a compact manifold
with boundary and $\{E_i\}_{i=0}^N$ is an open cover of $M$. Choose a partition of unity $\chi_i$ subordinate
to this open cover. Thus each $\chi_i$ is smooth and nonnegative and has support in $E_i$ and $\sum_{i=0}^N
\chi_i = 1$.  We also choose smooth nonnegative functions $\widetilde{\chi}_i$ such that $\mbox{supp}\,
\widetilde{\chi}_i \subset E_i$ for all $i$ and $\widetilde{\chi}_i = 1$ on $\mbox{supp}\, \chi_i$.
Note that the $\sum \widetilde{\chi}_i \neq 1$, in general, i.e.\ these do not form a partition of
unity, but that $\widetilde{\chi}_i \chi_i = \chi_i$ for each $i$.


The two standard choices of function spaces are weighted Sobolev and weighted H\"older spaces.
We define the unweighted versions of these spaces first.  Let $(M,g)$ be any complete Riemannian
manifold with bounded geometry and injectivity radius bounded below. The last hypothesis is
not entirely necessary for this definition, but is true in all the cases of interest here, so we assume
it for simplicity.  For any ball $B_1(q)$ of unit radius around any point $q \in M$, we can define
the local Sobolev and H\"older norms
\[
\begin{split}
& ||u||_{s,g, B_1(q)} = \left(\sum_{j \leq s} \int_{B_1(q)} |\nabla^j u|^2\, dV_g\right)^{\frac12},  \\
& ||u||_{k, \alpha, g, B_1(q)} = \sum_{ j \leq k} \sup |\nabla^j u| + \sup_{q_1, q_2 \in B_1(q)}
\frac{|\nabla^k u(q_1) - \nabla^k u(q_2)|}{ \mbox{dist}_g(q_1, q_2)^\alpha}.
\end{split}
\]
Here $s, k \in \mathbb N$ and $0 < \alpha < 1$. We then define
\[
||u||_{s,g} = \sup_{q \in M} ||u||_{s,g,B_1(q)}, \qquad ||u||_{k,\alpha, g} = \sup_{q \in M} ||u||_{k,\alpha, g, B_1(q)}
\]
as norms on the spaces $H^s_g(M)$ and $\calC^{k,\alpha}_g(M)$, respectively.
This definition applies equally well if $u$ is a section of any Hermitian vector bundle over $M$ for which
there is a standard trivialization over each of the balls $B_q(q)$; this is certainly the case of we are dealing
with sections of a tensor bundle over $M$.  (We could equally easily have defined $L^p$-based Sobolev spaces for
any $p \in (1,\infty)$ and for any real $s$, and all results below have analogues in this more general setting).

For each $i =1, \ldots, N$, choose a weight function $w_i$; this is a smooth strictly positive function
on $M$ which equals $1$ away from the end $E_i$, and in each of the cases a) - d) has the following description:
when $E_i$ is asymptotically cylindrical, periodic and hyperbolic, we set $w_i = e^{-r}$;
when $E_i$ is asymptotically conic, we take $w_i = r$. Finally, if $a = (a_1, \ldots, a_N)$
is any $N$-tuple of real numbers, we write
\[
\begin{array}{rl}
& w^a \cC^{k,\alpha}_g(M) = \{u: w_1^{-a_1}\ldots w_N^{-a_N} u \in \cC^{k,\alpha}_g(M)\}, \\
& w^a H^s_g(M) = \{u: w_1^{-a_1}\ldots w_N^{-a_N} u \in H^s_g(M)\}.
\end{array}
\]
We can also localize these spaces to each end, and thus define $w_i^{a_i} \cC^{k,\alpha}_g(E_i)$ and
$w_i^{a_i} H^s_g(E_i)$. Finally, a subscript $0$ indicates the subspace of functions which vanish at
$\del E_i$; thus, for example, $\cC^{k,\alpha}_{g,0}(E_i)$ consists of all functions $u_i \in
\cC^{k,\alpha}_g(E_i)$ such that $u_i = 0$ at $\del E_i$.

Let $A$ be any natural geometric elliptic operator associated to one of the metrics above;
we have in mind that $A$ is the scalar Laplacian, the Hodge Laplacian on $k$-forms,
the trace Laplacian $\nabla^* \nabla$ acting on a tensor bundle, or the conformal vector Laplacian  $\cvl$,
however the results we describe below apply to a broad class of elliptic operators with
similar asymptotic behaviour. To be definite, we suppose that $A$ has the following form
in each of the geometries of interest:
\begin{itemize}
\item[a)] $A \sim \del_r^2 + \Delta_{\zg}$;
\item[b)] $A \sim A_{\zg}$, where $A_{\zg}$ is the lift to the asymptotically periodic end of an elliptic
operator on the compact manifold $\zX$;
\item[c)] $A \sim \del_r^2 + a \del_r + e^{-2r} \Delta_{\zg}$;
\item[d)] $A \sim \del_r^2 + a r^{-1} \del_r + r^{-2} \Delta_{\zg}$.
\end{itemize}
In each case, the notation $A \sim A_0$ indicates that the coefficients of $A - A_0$ decay to zero
at a rate $e^{-\delta r}$ in cases a), b) and c), like $r^{-\delta}$ in case d).   For simplicity we have omitted
the typical terms of order $1$ and $0$ which might appear in the operators of interest; perhaps the most
important thing to note is that these lower order terms need not decay in cases a)-c), but the
terms of order $0$ in the asymptotically conic case d) must decay like $r^{-2}$.

We have set up the notation so that
\[
A: w^a H^{s+2}_g(M) \longrightarrow w^a H^s_g(M)
\]
in cases a)-c) for any values of the weight parameter $a_i$, whereas
\[
A: w^a H^{s+2}_g(M) \longrightarrow w^{a-2} H^s_g(M)
\]
in case d). The mapping properties between weighted H\"older spaces is phrased analogously,
but for the purpose of brevity we do not state these separately.

We finally come to the main Fredholm mapping properties for each of these classes of operators.
As before, we do not state these in the most general contexts, but specialize to natural geometric
elliptic operators on ends which are asymptotic to warped products as given by the descriptions above.
To state these results, we must define the indicial roots of the operator $A$ in each case.

We have already described in \SS 4-5 the indicial roots of the conformal vector Laplacian  $\cvl$ on asymptotically cylindrical
and asymptotically periodic ends.  If the end $E$ is asymptotically hyperbolic, then $\zeta$ is an indicial root
of $A$ if $A e^{-\zeta r} = \calO( e^{-(\zeta + \epsilon) r})$ for some $\epsilon > 0$. Notice that since the tangential
derivatives in $A$ are already accompanied by the factor $e^{-2r}$, these terms do not affect the calculation
of the indicial roots; in other words, the indicial roots in the asymptotically hyperbolic case are determined by
an algebraic equation involving the coefficients of the derivatives in the normal direction. Finally, in case d),
the number $\zeta$ is an indicial root if there exists a function $\phi$ on the cross-section $N$ such that
$A (r^\zeta \phi) = \calO(r^{\zeta-2-\epsilon})$ for some $\epsilon > 0$. This corresponds to a leading order
cancellation, since for an arbitrary value of $\zeta$ and smooth function $\phi$, one always has
$A( r^\zeta \phi) = \calO(r^{\zeta-2})$. Just as in the asymptotically cylindrical case, the indicial roots and
the coefficient functions $\phi$ in this case are determined by eigendata for the induced operator
on the cross section $(N,\zg)$.

The basic result that we state below is that $A$ is Fredholm acting between weighted Sobolev or H\"older spaces
if and only if none of the weight parameters $a_i$ are equal to the inaginary part of an indicial root on the
corresponding end.  Actually, if any one of the ends is asymptotically hyperbolic, then this condition
must be modified, as we now describe.  As described in \cite{RafeEdgeCPDE} (see Theorems 5.16 and 6.1 in particular),
in order to show that $A$ is Fredholm, it is necessary that a certain model operator for $A$ at each point at infinity,
called the normal operator $N(A)$, must be an isomorphism between these same weighted spaces. In
our setting where $g \sim dr^2 + e^{2r} h$ is asymptotically hyperbolic and $A$ is the conformal vector
Laplacian, the normal operator $N(A)$ turns out simply to be equal to the conformal vector Laplacian on
hyperbolic space $\HH^n$ itself. Thus the extra condition we are imposing is that if the end $E_i$
is asymptoticaly hyperbolic, then the weight parameter $a_i$ must  be chosen so that
\[
\cvl^{\HH^n}: x^{a_i} H^2(\HH^n; T\HH^n) \longrightarrow x^{a_i} L^2(\HH^n; T\HH^n)
\]
is an isomorphism. Here we are thinking of $\HH^n$ as the upper half-space model with $x > 0$ and
$y \in \RR^{n-1}$. It turns out that there is always an allowable range of weight values for which this is
true, see \cite{Lee:fredholm}. We call this the critical weight range associated to the operator $A = \cvl$
on an asymptotically hyperbolic end.

\begin{proposition}
With all notation as above, suppose that no weight parameter $a_i$ is indicial, and in addition, if $E_j$
is an asymptotically hyperbolic end, then $a_j$ lies in the critical range described above. Then for
any $k \geq 0$,
\[
A: w^a H^{k+2}_g(M) \longrightarrow w^{a'} H^k_g(M)
\]
is a Fredholm mapping; here $a_i' = a_i$ unless $E_i$ is asymptotically conic, in which case
$a_i' = a_i-2$.
\end{proposition}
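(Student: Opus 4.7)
The plan is to establish the Fredholm property by constructing a bounded global parametrix $G$ for $A$ and invoking Lemma \ref{L8XII0.1}. The construction proceeds by gluing together local parametrices on each end and on the compact interior piece. Three ingredients are needed: a local parametrix for the model operator on each end, a perturbation argument identifying parametrices for the model operator with parametrices for the actual operator $A$, and a cut-off / patching step producing $G$ together with compact remainders.

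First I would invoke the existing local parametrix theory, asserting that on each end $E_i$ there is an operator $G_i$ bounded between the appropriate weighted spaces and satisfying $\hat{A}_i G_i = \mathrm{Id} - R_i$, $G_i \hat{A}_i = \mathrm{Id} - R_i'$ where $\hat{A}_i$ is the model operator and the remainders gain some weight or regularity. For asymptotically cylindrical ends this is \cite[Thm. 4.4]{RafeEdgeCPDE}; for asymptotically periodic ends, \cite[Thm. 4.8]{MPU1}; for asymptotically conic ends, the standard $b$-calculus yields the same structure; for asymptotically hyperbolic ends, the required parametrix with compact remainder is available precisely when the weight $a_j$ lies in the critical range, since this guarantees invertibility of the normal operator $N(A) = \cvl^{\HH^n}$ on $x^{a_j} L^2(\HH^n)$ by \cite{Lee:fredholm}. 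The crucial point in each case is that the weight-avoidance hypothesis (no $a_i$ equal to the imaginary part of an indicial root, together with the critical range condition) is exactly what makes the remainders $R_i, R_i'$ compact on the relevant weighted space, by a Rellich-type embedding combined with the improved decay or regularity that the parametrix construction produces.

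Next I would pass from the model operator $\hat{A}_i$ to the actual operator $A$. On $E_i$, the difference $A - \hat{A}_i$ is a differential operator of order at most two whose coefficients decay at rate $e^{-\delta' r}$ (or $r^{-\delta'}$ in the conic case), and $G_i$ maps into the weighted space $w_i^{a_i} H^{k+2}_g$; hence $(A - \hat{A}_i) G_i$ lands in a space with strictly better weight, and this inclusion into $w_i^{a_i'} H^k_g$ is compact by bounded geometry. So $G_i$ remains a parametrix for $A$ on $E_i$ up to a (new) compact remainder. With the interior partition and cutoffs $\chi_i, \tilde\chi_i$, set
\[
G := \sum_{i=0}^{N} \tilde\chi_i\, G_i\, \chi_i,
\]
where $G_0$ is a standard interior pseudoinverse modulo finite-rank error. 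A direct computation gives
\[
A G = \sum_i \tilde\chi_i\, A G_i\, \chi_i + \sum_i [A,\tilde\chi_i]\, G_i\, \chi_i = \mathrm{Id} - \sum_i \tilde\chi_i R_i \chi_i + \sum_i [A,\tilde\chi_i] G_i \chi_i,
\]
using $\sum_i \tilde\chi_i \chi_i = \sum_i \chi_i = 1$. The first sum is compact by the previous step, and each commutator term has coefficients supported in the compact transition region where $\nabla \tilde\chi_i$ is nonzero, so it factors through an embedding $H^{k+1}_{g,\mathrm{comp}} \hookrightarrow w^{a'} H^k_g$ which is compact. An identical calculation handles $GA$, and Lemma \ref{L8XII0.1} then yields the Fredholm conclusion.

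The main obstacle is not any single step but the care required in organizing the weight bookkeeping uniformly across the four end types, in particular making the same $G$ serve both as a right and a left parametrix on spaces with the same weight vector $a$. This is where the critical-range hypothesis at each asymptotically hyperbolic end is essential: without it the model operator on $\HH^n$ fails to be invertible and neither the right nor the left parametrix construction on that end closes up with compact remainder in the same weighted space. Once the weight data on every end is admissible, the perturbation and patching steps proceed by standard arguments, and no further structural input beyond the cited local parametrix theorems is needed.
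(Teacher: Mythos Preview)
Your proposal is correct and follows essentially the same approach as the paper: construct local parametrices on each end (citing the same sources), patch them together via $G = \sum_i \tilde\chi_i G_i \chi_i$, and verify that the remainders and commutator terms are compact so that Lemma~\ref{L8XII0.1} applies. The only differences are presentational: you insert an explicit perturbation step from the model operator $\hat A_i$ to $A$ (the paper absorbs this into the cited parametrix constructions), and for the commutator $[A,\tilde\chi_i]G_i\chi_i$ you argue compactness via Rellich on the compact support of $\nabla\tilde\chi_i$, whereas the paper uses the disjointness of $\mathrm{supp}\,\nabla\tilde\chi_i$ and $\mathrm{supp}\,\chi_i$ together with the pseudodifferential nature of $G_i$ to conclude this term is in fact smoothing.
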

\begin{proof}
The way to prove this result in a uniform way independent of the type of geometry on each end
is to draw from the literature the existence of a parametrix, or approximate inverse modulo a
compact error, for $A$ on each of the types of ends. More specifically, we assert that for
$i = 0, \ldots, N$, there exist operators $Q_{1i}$ and $Q_{2i}$ such that
\[
\left| A \right|_{E_i} \circ B_i: \mbox{Id} - Q_{1i}, \quad B_i \circ \left. A \right|_{E_i} = \mbox{Id} - Q_{2i},
\]
where
\[
\chi_i B_i: w^{a'} H^s_g(E_i) \longrightarrow w^a H^{s+2}_g(M),
\]
and
\[
\chi_i Q_{1i}: w^{a'}H^{s}_g(E_i) \longrightarrow w^{a'-\epsilon} H^{s+1}_g(M),
\]
\[
\chi_i Q_{2i}: w^{a}H^{s+2}_g(E_i) \longrightarrow w^{a-\epsilon} H^{s+3}_g(M),
\]
are all bounded. We have included the cutoff functions $\chi_i$ on each of these
factors as a simple way to localize to each end.  The important fact here is that
\[
\begin{split}
& \chi_i Q_{1i}: w^{a'}H^{s}_g(E_i) \longrightarrow w^{a'} H^s_g(E_i), \\
&\chi_i Q_{2i}: w^{a}H^{s+2}_g(E_i) \longrightarrow w^{a}H^{s+2}_g(E_i)
\end{split}
\]
are both compact operators.

Although we have phrased this in fairly abstract operator-theoretic terms, we do need one specific
fact about the structure of these operators, which is that the $B_i$ are pseudodifferential
operators of order $-2$, hence have the following special property that for each $i$,  the commutator
\[
[ A, \widetilde{\chi}_i] B_i \chi_i
\]
is compact. In fact, $[A, \widetilde{\chi}_i]$ is a first order operator with compactly supported
coefficients which have support disjoint from the support of $\chi_i$. This means that the composition
above is actually a smoothing operator which maps functions on $M$ into $\cC^{\infty}_0(E_i)$.

{}From these local parametrices we now define the global parametrix
\[
B = \sum_{i=0}^N \widetilde{\chi}_i B_i \chi_i.
\]
We compute that
\[
\begin{array}{rcl}
A \circ B & = & \sum_{i=0}^N  \left( \widetilde{\chi}_i (\mbox{Id} - Q_{1i}) \chi_i - [A, \widetilde{\chi}_i] B_i \chi_i\right) \\
& = & \mbox{Id} - \sum_{i=0}^N \left( [A, \widetilde{\chi}_i] B_i \chi_i + \widetilde{\chi}_i Q_{1i} \chi_i\right).
\end{array}
\]
By our various hypotheses, all terms in the final sum are compact operators, and hence
\[
A \circ B = \mbox{Id} - Q_1,
\]
where $Q_1$ is compact.  A similar computation and argument gives the corresponding
conclusion for $B \circ A$.
\qed

It remains to cite the relevant places in the literature where the existence of these local parametrices are proved.
We first mention the paper \cite{RafeEdgeCPDE}, which gives a comprehensive treatment of parametrices for the class
of elliptic differential operators of `edge type'.  These naturally include cases a), c) and d), where we note
that although a second order asymptotically conic operator $A$ is not actually an edge operator, but $r^2 A$ is,
and this suffices for the purposes above. Parametrices for elliptic operators in the asymptotically periodic
case were first constructed by Taubes \cite{TaubesPeriodic}, and that construction was generalized in \cite{MPU1}
to allow for the possibility of indicial roots with real part $0$, as occurs in our applications here.
An earlier source which treats elliptic theory in both the asymptotically cylindrical and conic cases is the work of
Lockhart and McOwen \cite{LockhartMcOwenPisa}. Finally, parametrices in the
asymptotically hyperbolic case have also been constructed in \cite{Lee:fredholm}.
\end{proof}

\bigskip

{\sc Acknowledgements:} PTC wishes to thank the Institut des
Hautes \'Etudes Scientifiques, Bures sur Yvette, for
hospitality and financial support during part of work on this
paper.

\bibliographystyle{amsplain}
\bibliography{../references/bartnik,%
../references/myGR,%
../references/newbiblio,%
../references/netbiblio,%
../references/newbib,%
../references/reffile,%
../references/bibl,%
../references/Energy,%
../references/hip_bib,%
../references/dp-BAMS,%
../references/besse2%
}%

\def\polhk#1{\setbox0=\hbox{#1}{\ooalign{\hidewidth
  \lower1.5ex\hbox{`}\hidewidth\crcr\unhbox0}}} \def\cprime{$'$}
  \def\cprime{$'$}
\providecommand{\bysame}{\leavevmode\hbox to3em{\hrulefill}\thinspace}
\providecommand{\MR}{\relax\ifhmode\unskip\space\fi MR }
\providecommand{\MRhref}[2]{%
  \href{http://www.ams.org/mathscinet-getitem?mr=#1}{#2}
}
\providecommand{\href}[2]{#2}
\begin{thebibliography}{10}

\bibitem{AgmonNirenbergBanach}
S.~Agmon and L.~Nirenberg, \emph{Properties of solutions of ordinary
  differential equations in {B}anach space}, Commun.\ Pure Appl.\ Math.
  \textbf{16} (1963), 121--239. \MR{0155203 (27 \#5142)}

\bibitem{Bartnik86}
R.~Bartnik, \emph{The mass of an asymptotically flat manifold}, Commun.\ Pure
  Appl.\ Math. \textbf{39} (1986), 661--693. \MR{849427 (88b:58144)}

\bibitem{BartnikIsenberg}
R.~Bartnik and J.~Isenberg, \emph{The constraint equations}, The Einstein
  equations and the large scale behavior of gravitational fields, Birkh\"auser,
  Basel, 2004, pp.~1--38. \MR{MR2098912 (2005j:83007)}

\bibitem{CBY}
Y.~Choquet-Bruhat and J.~York, \emph{The {C}auchy problem}, General Relativity
  (A.~Held, ed.), Plenum Press, New York, 1980, pp.~99--172. \MR{583716
  (82k:58028)}

\bibitem{PTCMazzeoCylindrical}
P.T. Chru\'{s}ciel and R.~Mazzeo, \emph{Initial data sets with ends of
  cylindrical {type: I. The Lichnerowicz} equation},  (2012), arXiv:1201.4937
  [gr-qc].

\bibitem{IMP}
J.~Isenberg, R.~Mazzeo, and D.~Pollack, \emph{Gluing and wormholes for the
  {E}instein constraint equations}, Commun.\ Math.\ Phys. \textbf{231} (2002),
  529--568, arXiv:gr-qc/0109045.

\bibitem{Kuehnel}
W.~K\"uhnel, \emph{{Conformal transformations between Einstein spaces}},
  Conformal geometry ({B}onn, 1985/1986) (R.S. Kulkarni and U.~Pinkall, eds.),
  Aspects Math., E12, F.~Vieweg \& Sohn, Braunschweig, 1988, pp.~105--146.
  \MR{MR979791 (90b:53055)}

\bibitem{Lee:fredholm}
J.M. Lee, \emph{Fredholm operators and {E}instein metrics on conformally
  compact manifolds}, Mem. Amer. Math. Soc. \textbf{183} (2006), vi+83,
  arXiv:math.DG/0105046. \MR{MR2252687}

\bibitem{LockhartMcOwenPisa}
R.B. Lockhart and R.C. McOwen, \emph{Elliptic differential operators on
  noncompact manifolds}, Ann.\ Scuola Norm.\ Sup.\ Pisa Cl.\ Sci.\ (4)
  \textbf{12} (1985), 409--447. \MR{MR837256 (87k:58266)}

\bibitem{RafeEdgeCPDE}
R.~Mazzeo, \emph{Elliptic theory of differential edge operators. {I}}, Commun.\
  Partial Diff.\ Eq. \textbf{16} (1991), 1615--1664. \MR{1133743 (93d:58152)}

\bibitem{MPU1}
R.~Mazzeo, D.~Pollack, and K.~Uhlenbeck, \emph{Moduli spaces of singular
  {Y}amabe metrics}, Jour.\ Amer.\ Math.\ Soc. \textbf{9} (1996), 303--344.
  \MR{MR1356375 (96f:53055)}

\bibitem{MelroseMendoza}
R.B. Melrose and G.~Mendoza, \emph{Elliptic operators of totally characteristic
  type},
  \url{http://www-math.mit.edu/~rbm/papers/b-elliptic-1983/1983-Melrose-Mendoza.pdf},
  1983.

\bibitem{ReedSimonVol4}
M.C. Reed and B.~Simon, \emph{Methods of modern mathematical physics. {IV}.
  {A}nalysis of operators}, Academic Press [Harcourt Brace Jovanovich
  Publishers], New York, 1978. \MR{0493421 (58 \#12429c)}

\bibitem{TaubesPeriodic}
C.H. Taubes, \emph{Gauge theory on asymptotically periodic {$4$}-manifolds},
  Jour.\ Diff.\ Geom. \textbf{25} (1987), 363--430. \MR{882829 (88g:58176)}

\end{thebibliography}
\end{document}